\newtheorem{Lemma}{Lemma} \newtheorem{Theorem}{THEOREM}
\newtheorem{prop}{Proposition} \newtheorem{assu}{Assumption}
 \theoremstyle{definition}
 \newcommand\calC{{\mathcal{C}}}
\newcommand\Tr{{\rm Tr\,}}
\newcommand\Trs{{\rm Tr}_0\,}
\newcommand\id{\mathbb{I}}
\newcommand\nn\nonumber
\numberwithin{equation}{section}
 \newcommand{\R}{\mathbb{R}}
 \newcommand{\N}{\mathbb{N}}
\newcommand{\Z}{\mathbb{Z}} \newcommand{\C}{\mathbb{C}}
 \newcommand{\F}{\mathcal{E}}
\newcommand{\E}{\mathcal{E}}
 \newcommand\infspec{{\rm{inf\, spec\,}}}
 \newcommand\eb{E_{\rm b}}
\newcommand\as{a_{\rm s}}
 \DeclareMathOperator{\dive}{div}
 \DeclareMathOperator{\const}{const}
 \DeclareMathOperator{\re}{Re}
\begin{document}

\title[Low Density Limit of BCS Theory]{Low Density Limit of BCS
  Theory and \\ Bose-Einstein Condensation of Fermion Pairs}

\author[C. Hainzl]{Christian Hainzl} \address{{\rm (Christian Hainzl)}
  Mathematisches Institut, Universit\"at T\"ubingen, Auf der
  Morgenstelle 10, 72076 T\"ubingen, Germany}
\email{christian.hainzl@uni-tuebingen.de}

\author[R. Seiringer]{Robert Seiringer} \address{{\rm (Robert
    Seiringer)} Department of Mathematics and Statistics, McGill
  University, 805 Sherbrooke Street West, Montreal, QC H3A 2K6,
  Canada} \email{rseiring@math.mcgill.ca}

\date{May 5, 2011}

\begin{abstract}
  We consider the low density limit of a Fermi gas in the BCS
  approximation. We show that if the interaction potential allows for
  a two-particle bound state, the system at zero temperature is well approximated by the
  Gross-Pitaevskii functional, describing a Bose-Einstein condensate
  of fermion pairs.
\end{abstract}

\maketitle

\section{Introduction and Main Results}

\subsection{Introduction}

The bosonic behavior of pairs of fermions is a topic that has been
investigated in condensed matter physics for more than half a
century. It plays a crucial role in the Bardeen-Cooper-Schrieffer
(BCS) theory of superconductivity \cite{BCS}, and is used to explain
the superfluid behavior of He$^3$ and cold gases of fermionic atoms,
for instance. As long as the pair of fermions is tightly bound, it is
not surprising that it effectively behaves like a boson, and hence can
form a Bose-Einstein condensate (BEC). In BCS theory the pairing
mechanism is also important under very weak attraction where the
separation of the paired particles can be much larger than the average
particle spacing, however.

It was realized in the eighties \cite{Leg,nozieres} that BCS theory
actually applies both to the case of BECs of tightly bound fermions
and to cases where the pairing mechanism is very weak. The regime
in-between is called the BEC-BCS crossover regime
\cite{randeria,zwerger}. This crossover is currently a hot topic in
condensed matter physics, and recent experiments on cold atomic gases
have been able to probe large parts of this regime. We refer to
\cite{BDZ} for a recent review.

From the mathematical physics point of view, the pairing mechanism in
fermionic systems is poorly understood, and there are no rigorous
results starting from first principle, i.e., with an appropriate
many-body Hamiltonian. In this work, we shall {\em assume} the BCS
approximation to be correct, and investigate some of its
consequences. This paper can be viewed as a sequel to the recent work
\cite{FHSS,FHSS2} where the emergence of Ginzburg-Landau (GL) theory
\cite{GL} from BCS theory was studied. Close to the critical
temperature, GL arises as an effective theory on the macroscopic
scale, describing the variations in the density of fermion pairs. For
this it is not necessary to form actual bound states between the
fermions, a very weak attraction is sufficient for pairing.

In this paper we are interested in the low density limit in the case where the 
interparticle interaction {\em does} allow for two-particle bound states. We 
consider the system at zero temperature and show that the macroscopic variations 
in the pair density are, to leading order, correctly described by the 
Gross-Pitaevskii (GP) functional \cite{gross,pitaevskii}, describing a BEC of 
fermion pairs with an effective {\em repulsive} interaction. We allow for a large 
class of possible interactions among the particles. Essentially our sole 
assumption will be the existence of two-particle bound states.  The existence of 
such bound states is crucial here. In their absence one obtains an ideal Fermi gas 
in the low density limit, up to exponentially small corrections 
\cite{FHNS,HS,HS3}. 

The proof of our results uses the same tools as the proof of the main
theorem in \cite{FHSS}. Parts of it are simpler, in fact, since we
work at zero temperature here. We shall demonstrate that the
semiclassical estimates in \cite{FHSS} extend to the zero temperature
case.

\subsection{The BCS Functional}

We consider a macroscopic sample of a system of spin $1/2$
fermions at zero temperature. For simplicity, we restrict our
attention to three spatial dimensions, but our analysis applies to any
dimension $1\leq d\leq 3$.  The interaction among the fermions is
described by a local two-body potential $V$. In addition, the
particles are subject to external electric and/or magnetic
fields. Neutral atoms would not couple to these fields, of course, but
there can be other forces, e.g., arising from rotation, with a similar
mathematical description.  In BCS theory the state of the system is
described in terms of a $2\times 2$ operator valued matrix
\begin{equation}\label{def:gamma}
  \Gamma = \left( \begin{array}{cc} \gamma & \alpha \\ 
\bar\alpha & 1 -\bar\gamma \end{array}\right) 
\end{equation}
satisfying $0\leq \Gamma \leq 1$ as an operator on $L^2(\R^3)\oplus
L^2(\R^3)$. The bar denotes complex conjugation, i.e., $\bar \alpha$
has the integral kernel $\overline {\alpha(x,y)}$. The fact that
$\Gamma$ is hermitian implies that $\gamma$ is hermitian and $\alpha$
is symmetric, i.e., $\gamma(x,y) = \overline{\gamma(y,x)}$ and
$\alpha(x,y)=\alpha(y,x)$. Moreover, since $\Gamma^2\leq \Gamma$, we
have $0\leq \gamma\leq 1$ and $0\leq \alpha\bar \alpha \leq
\gamma(1-\gamma)$.

We are interested in the effect of weak and slowly varying external
fields. Hence we introduce a small parameter $h>0$ and write the
external magnetic and electric potentials as $h A(h x)$ and $h^2
W(hx)$, respectively. In order to avoid having to introduce boundary
conditions, we assume that the system is infinite and periodic with
period $h^{-1}$, in all three directions. In particular, $A$ and $W$
are periodic, and we assume that the state $\Gamma$ is periodic. Our
goal is to calculate the ground state energy per unit volume, and the
corresponding BCS minimizer.

We find it convenient to do a rescaling and use macroscopic variables
instead of the microscopic ones. The rescaled BCS functional has the
form
\begin{align}\nonumber
  \F^{\rm BCS}(\Gamma) & := \Tr \left[ \left( \left(-i h \nabla + h
        A(x) \right)^2 -\mu + h^2 W(x)\right) \gamma \right] \\ &
  \quad \ + \int_{\calC\times \R^3} V(h^{-1}(x-y)) |\alpha(x,y)|^2 \,
  { dx \, dy} \label{def:bcs}
\end{align}
where $\calC$ denotes the unit cube $[0,1]^3$, and $\Tr$ stands for
the trace per unit volume. Explicitly, if $\chi$ denotes the characteristic
function of $\calC$, and $B$ is a periodic operator with $\chi B \chi$
trace class, $\Tr B$ equals the usual trace of $\chi B\chi$. The
location of the cube is obviously of no importance. Using the Floquet
decomposition \cite[Sect.~XIII.16]{ReSi}, it is not difficult to see that the
trace per unit volume has the usual properties of a trace like
cyclicity, for instance, and standard inequalities like H\"older's
inequality hold. This is discussed in detail in~\cite[Sect.~3]{FHSS}.

In (\ref{def:bcs}) we choose units such that the particle mass equals
$1$. The particles have spin 1/2, which adds an extra factor $2$ to
the energy. The chemical potential is denoted by $\mu/2$, for
convenience, and the external electric potential is really $W /2$.

For heuristic arguments explaining the derivation of the BCS
functional (\ref{def:bcs}), we refer to \cite[Appendix~A]{HHSS}.  The
BCS state of the system is a minimizer of this functional over all
admissible states $\Gamma$, i.e., periodic $\Gamma$ of the form
(\ref{def:gamma}) satisfying $0\leq \Gamma\leq 1$.

We make the following assumptions on the potentials $A$ and $W$ in
(\ref{def:bcs}). Our results presumably hold under slightly weaker
regularity assumptions on $W$ and $A$, but to keep things simple we
shall not aim for the weakest possible conditions.

\begin{assu}\label{as0}
  We assume both $W$ and $A$ to be periodic with period $1$. We
  further assume that $\widehat W(p)$ and $|\widehat A(p)|(1+|p|)$ are
  summable, with $\widehat W(p)$ and $\widehat A(p)$ denoting the
  Fourier coefficients of $W$ and $A$, respectively. In particular,
  $W\in C^0(\R^3)$ and $A\in C^1(\R^3)$.
\end{assu}

The interaction potential will be assumed to satisfy the following
properties.

\begin{assu}\label{as1}
  The interaction potential $V$ is assumed to be real-valued and
  reflection-symmetric, i.e., $V(x)=V(-x)$, with $V \in
  L^{3/2}(\R^3)$.  Moreover, the Schr\"odinger operator $-\nabla^2 +
  V(x)$ has a negative energy bound state.
\end{assu}

The $L^{3/2}$ assumption on $V$ guarantees relative form-bounded\-ness
with respect to the Laplacian.  The ground state energy of $-\nabla^2
+ V(x)$ will be denoted by $-\eb<0$, and its ground state wave
function by $\alpha_0$. It is unique up to a phase factor. We find it
convenient to normalize $\alpha_0$ such that
\begin{equation}\label{normalization}
  \int_{\R^3} |\widehat \alpha_0(q)|^2 \,\frac{dq}{(2\pi)^3} = 1\,,
\end{equation}
with $\widehat \alpha_0(q) = (2\pi)^{-3/2} \int_{\R^3} \alpha(x)
e^{-iq\cdot x} dx$ denoting the Fourier transform.

In the following, we are interested in the case $\mu = -\eb + h^2
\delta\mu$, which corresponds to the low density limit. We find it
convenient to absorb the constant $h^2 \delta \mu$ into the potential
$W$, i.e., we set $\mu = -\eb$ and write $W(x)$ instead of $W(x) -
\delta\mu$.

\subsection{The GP Functional}

Let $\psi\in H^1_{\rm per}(\R^3)$, the periodic functions in $H_{\rm
  loc}^1(\R^3)$. For $g\geq 0$, the GP functional is defined as
\begin{equation} \label{GPfunct} \E^{\rm GP}(\psi) = \int_{\calC}
  \left[ \tfrac 1 4 \left| \left( -i\nabla + 2
        A(x)\right)\psi(x)\right|^2 + W(x) |\psi(x)|^2 + g |\psi(x)|^4
  \right] dx\,.
\end{equation}
The coefficient $2$ in front of the vector potential $A$ is due to the
fact that $\psi$ describes pairs of particles, and the charge of a
pair is twice the particle charge. The factor $4$ in front of the
kinetic energy is twice the mass of a fermion pair. The coefficient
$g$ will be calculated below from BCS theory.

We denote the ground state energy of the GP functional by
\begin{equation}
  E^{\rm GP}(g) = \inf \left\{  \E(\psi)\, : \, \psi\in H^1_{\rm per}(\R^3)\right\}\,.
\end{equation}
It is not difficult to show that under our assumptions on $A$ and $W$,
there exists a corresponding minimizer, which satisfies a second order
differential equation known as the GP equation. Note that there is no
normalization constraint on $\psi$, the normalization is determined by
the chemical potential which is contained in $W(x)$.

\subsection{Main Results}\label{ss:results}

We define the energy $E^{\rm BCS}(\mu)$ as the infimum of $\F^{\rm
  BCS}$ over all admissible $\Gamma$, i.e.,
\begin{equation}\label{def:Fbcs}
  E^{\rm BCS}(\mu) = \inf_{\Gamma}\, \F^{\rm BCS}(\Gamma) \,.
\end{equation}
Recall that a state $\Gamma$ is admissible if $0\leq \Gamma\leq 1$ and $\Gamma$ is
periodic, i.e., it commutes with translations by $1$ in the three
coordinate directions.

Recall also that $\alpha_0$ denote the ground state of $-\nabla^2 + V(x)$,
normalized as in (\ref{normalization}).

\begin{Theorem}\label{thm:main}
  Let
  \begin{equation}\label{def:b3}
    g =      \int_{\R^3} |\widehat \alpha_0 (q)|^4 ( q^2 + \eb) \,\frac{dq}{(2\pi)^3} \,.
  \end{equation} 
  Under Assumptions~\ref{as0} and~\ref{as1} above, we have, for small
  $h$,
  \begin{equation}\label{enthm}
    E^{\rm BCS}(-\eb) =  h\left(E^{\rm GP}(g)  + e \right) \,,
  \end{equation}
  with $e$ satisfying the bounds $\const h \geq e \geq - \const
  h^{1/5}$.  Moreover, if $\Gamma$ is an approximate minimizer of
  $\F^{\rm BCS}$ at $\mu = -\eb$, in the sense that $\F^{\rm
    BCS}(\Gamma)\leq h ( E^{\rm GP}(g) + \epsilon)$ for some small
  $\epsilon > 0$, then the corresponding $\alpha$ can be decomposed as
  \begin{equation}\label{thm:dec}
    \alpha  = \frac h2 \big( \psi( x) \widehat\alpha_0(-ih\nabla) + 
\widehat \alpha_0(-ih\nabla) \psi(x)\big)  + \sigma
  \end{equation}
  with $\E^{\rm GP}(\psi) \leq E^{\rm GP}(g) + \epsilon + \const
  h^{1/5}$ and
  \begin{equation}\label{sigmab}
    \int_{\calC\times \R^3} |\sigma(x,y)|^2 \, dx\, dy  \leq  \const h^{3/5}\,.
  \end{equation}
\end{Theorem}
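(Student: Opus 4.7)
I would prove (\ref{enthm}) by matching an upper bound, obtained from an explicit trial BCS state built from the GP minimizer and the two-body bound state $\alpha_0$, against a lower bound proved by a semiclassical analysis in the spirit of \cite{FHSS} adapted to $T=0$. The decomposition (\ref{thm:dec}) and the bound (\ref{sigmab}) then emerge as byproducts of the lower bound argument.

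\textbf{Upper bound.} I would take the trial pair operator
\begin{equation*}
\alpha_\star = \tfrac{h}{2}\bigl(\psi_\star\,\widehat\alpha_0(-ih\nabla) + \widehat\alpha_0(-ih\nabla)\,\psi_\star\bigr),
\end{equation*}
with $\psi_\star$ a GP minimizer, and complete it to a BCS state via $\gamma_\star = \tfrac12\bigl(1 - \sqrt{1 - 4\alpha_\star\bar\alpha_\star}\bigr)$, which saturates $\gamma(1-\gamma)=\alpha\bar\alpha$ and enforces $0\leq \Gamma_\star\leq 1$. Since $\alpha_\star$ has operator norm $O(h)$, one has $\gamma_\star = \alpha_\star\bar\alpha_\star + O(\alpha_\star^4)$. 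The combined contribution $\Tr\bigl[\bigl((-ih\nabla+hA)^2 + h^2 W + \eb\bigr)\gamma_\star\bigr] + \int V(h^{-1}(x-y))\,|\alpha_\star(x,y)|^2\,dx\,dy$ vanishes at leading order because $(-\nabla^2+V)\alpha_0 = -\eb\alpha_0$; commuting $\widehat\alpha_0(-ih\nabla)$ past $\psi_\star$ and evaluating the resulting integrals in the relative variable yields the GP kinetic term $\tfrac14|(-i\nabla + 2A)\psi_\star|^2$, the potential term $W|\psi_\star|^2$, and a quartic term whose coefficient is exactly $g$ in (\ref{def:b3}). The outcome is $\F^{\rm BCS}(\Gamma_\star) = h\,\E^{\rm GP}(\psi_\star) + O(h^2)$.

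\textbf{Lower bound.} For an approximate minimizer $\Gamma$ I would proceed in three steps. First, a priori bounds on $\Tr\gamma$ and on a kinetic norm of $\alpha$, of the correct orders in $h$, follow from $\F^{\rm BCS}(\Gamma) = O(h)$, the relative form boundedness of $V$ (which uses $V\in L^{3/2}$), and $0\leq \Gamma\leq 1$. Second, introduce center-of-mass and relative coordinates and project $\alpha$ onto the ground state of $-\nabla^2+V$ in the relative variable to define $\psi$; this produces the decomposition (\ref{thm:dec}) with $\sigma$ the orthogonal remainder. Third, establish the key semiclassical lower bound
\begin{equation*}
\F^{\rm BCS}(\Gamma) \;\geq\; h\,\E^{\rm GP}(\psi) \,-\, C\,h^{6/5}.
\end{equation*}
Taking the infimum over $\psi$ and comparing with the assumption on $\Gamma$ then gives the lower bound in (\ref{enthm}) as well as (\ref{sigmab}). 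The main tools are: (i) the Schur bound $\gamma(1-\gamma)\geq \alpha\bar\alpha$ (from $\Gamma\geq 0$), used to convert the trace of $\gamma$ into a quartic form in $\alpha$ up to controlled errors; (ii) the spectral gap of $-\nabla^2+V$ above $-\eb$, used to control the $\sigma$-contribution from below; (iii) semiclassical commutator estimates as in \cite[Sect.~4--5]{FHSS}, used to extract the magnetic GP kinetic form from the microscopic trace.

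\textbf{Main obstacle.} The delicate step is the third one: producing the quartic term with precisely the constant $g$ of (\ref{def:b3}) while simultaneously keeping $\sigma$ under control. Unlike the finite-temperature analysis of \cite{FHSS}, there is no entropy term to provide a convenient convexity estimate, so the entire quartic contribution must be squeezed out of the positivity $\Gamma\geq 0$ via the Schur complement. The exponent $1/5$ in the final error bound arises from optimally balancing the error in this quartic replacement against the remainder in the semiclassical extraction of the GP kinetic energy.
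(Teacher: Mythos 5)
Your overall architecture (trial state upper bound, a priori decomposition of $\alpha$, semiclassical lower bound) is the right one, but there is a genuine gap in the mechanism you propose for the lower bound. You want to extract the quartic term with coefficient $g=\int|\widehat\alpha_0(q)|^4(q^2+\eb)\,\frac{dq}{(2\pi)^3}$ "from the positivity $\Gamma\geq 0$ via the Schur complement," i.e.\ from $\gamma(1-\gamma)\geq\alpha\bar\alpha$. That inequality gives $\Tr K^{A,W}\gamma\geq \Tr K^{A,W}\alpha\bar\alpha+\Tr K^{A,W}\gamma^2$, and since $\gamma\geq\alpha\bar\alpha$ does \emph{not} imply $\gamma^2\geq(\alpha\bar\alpha)^2$ as operators, the only clean step is $\Tr K^{A,W}\gamma^2\geq\lambda^{A,W}\Tr\gamma^2\geq\lambda^{A,W}\Tr(\alpha\bar\alpha)^2$ with $\lambda^{A,W}=\infspec K^{A,W}\approx\eb$. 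This produces a quartic coefficient $\eb\int|\widehat\alpha_0|^4$, which is strictly smaller than $g$: the $q^2$ part of $g$ comes from $\Tr\bigl[(-h^2\nabla^2)(\alpha\bar\alpha)^2\bigr]$ and is lost when you replace $K^{A,W}$ by its bottom eigenvalue. Since $E^{\rm GP}(g)$ is increasing in $g$, your lower bound would not match the upper bound. The paper uses the crude bound $\lambda^{A,W}\Tr(\alpha\bar\alpha)^2$ only for the a priori estimate $\|\psi\|_{H^1}\leq C$ (Proposition~\ref{new:prop}), where the constant is irrelevant. For the energy itself it does something structurally different: it compares $\Gamma$ with the reference state $\Gamma_\Delta=\theta(-H_\Delta)$ built from $\psi_<$, writes $\F^{\rm BCS}(\Gamma)$ as the semiclassically computable quantity $-\tfrac12\Trs([H_\Delta]_--[H_0]_-)$ plus correction terms, and controls the corrections by the operator inequality $\Trs H_\Delta(\Gamma-\Gamma_\Delta)\geq\Tr(\Gamma-\Gamma_\Delta)|H_\Delta|(\Gamma-\Gamma_\Delta)$ (the lemma inspired by \cite{HLS}). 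The sharp constant $g$ then comes out of the $E_2$ term of the semiclassical expansion of the \emph{reference} state, not out of a positivity bound on the arbitrary $\Gamma$. Without this completion-of-the-square step your plan stalls exactly at the point you identify as "the delicate step." A related omission: you never cut $\psi$ at momenta $|p|\leq\epsilon h^{-1}$ to obtain the $H^2$ control needed for the semiclassical estimates; the exponent $1/5$ in the theorem is precisely the optimization of $\epsilon+h^{1/2}\epsilon^{-3/2}$ over this cutoff, not of a "quartic replacement" error.

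Your upper bound, by contrast, is a legitimate alternative to the paper's. The paper takes $\Gamma_\Delta=\theta(-H_\Delta)$ and evaluates its energy with Theorems~\ref{thm:scl} and~\ref{lem3}; you instead prescribe $\alpha_\star$ explicitly and complete it to the pure quasi-free state $\gamma_\star=\tfrac12(1-\sqrt{1-4\alpha_\star\bar\alpha_\star})$, which is admissible (indeed a projection, since $f(\alpha\bar\alpha)\alpha=\alpha f(\bar\alpha\alpha)$), and $\Tr[(-h^2\nabla^2+\eb)(\alpha_\star\bar\alpha_\star)^2]$ does reproduce the full $g$ including the $q^2$ part. What you lose is that the explicit computation of the order-$h$ terms (the residual of the relative-motion energy caused by symmetrizing $\psi\widehat\alpha_0+\widehat\alpha_0\psi$, and the charge-$2$ magnetic coupling) is exactly the content of the semiclassical expansion anyway, and the paper's choice of $\Gamma_\Delta$ is reused as the reference state in the lower bound, so the two halves share their machinery. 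Your trial state buys nothing for the lower bound.
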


To appreciate the bound (\ref{sigmab}), note that the square of the
$L^2(\calC\times \R^3)$ norm of the first term on the right side of
(\ref{thm:dec}) is of the order $h^{-1}$, and hence is much larger
than the one of $\sigma$. To leading order in $h$, the
pair wave function $\alpha(x,y)$ is thus given by
\begin{equation}
\frac{ \psi(x) + \psi(y) }{2(2\pi)^{3/2} h^2} \alpha_0(h^{-1}(x-y)) \,,
\end{equation}
with $\psi$ a minimizer of the GP functional (\ref{GPfunct}).  This
agrees with 
\begin{equation}
\frac {\psi(\tfrac 12 (x+y))} {(2\pi)^{3/2} h^2}
\alpha_0(h^{-1}(x-y))
\end{equation}
to leading order in $h$, hence $\psi$ describes the center of mass
motion of pairs of fermion with are bound in the ground state of
$-h^2\nabla^2 + V(x/h)$.

Note that $g$ in (\ref{def:b3}) is strictly positive, even for purely
attractive interaction potentials $V$.  In the limit of a point
interaction \cite[Sect.~I.1]{albe} with scattering length $\as>0$ we
have
\begin{equation}
  \widehat \alpha_0(q) = \frac{ \sqrt{8\pi}\, \eb^{1/4}}{q^2 + \eb} 
\quad \text{and} \quad E_b = \frac 1{\as^2}\,,
\end{equation}
and hence $g = 2\pi \as$. Since the mass of the fermion pairs is $2$,
this corresponds to a scattering length of $2 \,\as$ for the pair
scattering \cite{melo}. The factor $2$ is an artifact of the BCS
approximation; an investigation of the actual four-body problem with
pseudo-potentials predicts a scattering length $\approx 0.6\, \as$
\cite{schl}.

By varying the external potential $W$, our bounds (\ref{enthm}) on the ground state
energy  can be used to obtain bounds on the particle
density as well. In particular, the number of particles per unit
volume, $N$, can be calculated by replacing $W(x)$ by $W(x) +
\delta\mu$ and taking the derivative of the energy with respect to
$\tfrac 12 h^2 \delta\mu$ at $\delta\mu = 0$.  To leading order in
$h$, the result is that
\begin{equation}
  N =  \frac 2 h \int_{\calC} |\psi^{\rm GP}(x)|^2 \, dx
\end{equation}
where $\psi^{\rm GP}$ is a minimizer of the GP functional
(\ref{GPfunct}). The average particle density, in microscopic variables, is
$\rho= h^3 N = 2 h^2 \int |\psi^{\rm GP}|^2$ and is thus of order
$h^2$. Hence our scaling limit corresponds indeed to low density.

In the translation invariant case, where $A(x)=0$ and $W(x) =
-\delta\mu < 0$ is constant, the GP minimizer is given by $|\psi^{\rm
  GP}(x)|^2 = \delta\mu/(2 g)$, and $E^{\rm GP}(g)= -
\delta\mu^2/(4g)$. In particular, the ground state energy per
particle, which is equal to $E^{\rm BCS}(\mu)/N + \tfrac 12 \mu $, is
given by
\begin{equation}
-\tfrac 12 \eb + \tfrac 12 g \rho + \text{higher order in $\rho$}
\end{equation}
for
small density $\rho$.

\subsection{Outline of the paper}

In the following Section~\ref{sec:semi} we shall state our main
semiclassical estimates. These are a crucial input to obtain the
bounds in Theorem~\ref{thm:main}. They are an extension to zero temperature
 of the analogous expressions at positive temperature obtained
in \cite[Sect.~2]{FHSS}.  An upper bound on $E^{\rm BCS}$ will be
derived in Section~\ref{sec:up}, using the variational
principle. Finally, Section~\ref{sec:low1} contains the lower
bound. In this final section also the structure of approximate
minimizers will be investigated. This leads to a definition of the
order parameter $\psi$. Our proof follows closely the proof of the
main theorem in \cite{FHSS}, but is partly simpler due to the fact
that we work at zero temperature.

Throughout the proofs, $C$ will denote various different constants. We
will sometimes be sloppy and use $C$ also for expressions that depend
only on some fixed, $h$-independent, quantities like $\eb$ or
$\|W\|_\infty$, for instance.

\section{Semiclassical Estimates}\label{sec:semi}

This section contains the semiclassical estimates needed in the proof
of Theorem~\ref{thm:main}. Let $\psi$ be a periodic function in
$H^2_{\rm loc}(\R^3)$. Pick a reflection-symmetric and real-valued
function $t$, with the property that
\begin{equation}\label{t:as1}
  \partial^\gamma t \in L^{6}(\R^3) 
\end{equation}
 and 
\begin{equation}\label{t:as2}
  \int_{\R^3} \frac{|\partial^\gamma t(q)|^2}{ 1+q^2}\, dq < \infty
\end{equation}
for all multi-indices $\gamma \in \{0,1\,\dots,4\}^3$. We shall later
choose $t(q) = 2 (q^2 + \eb) \widehat \alpha_0(q)$, but the results of
this section are valid for general functions $t$ satisfying
(\ref{t:as1}) and (\ref{t:as2}).

Let $\Delta$ denote the periodic operator
\begin{equation}\label{def:delta}
  \Delta = - \frac h 2\left(\psi(x) t(-ih\nabla) + t(-ih\nabla) \psi(x)\right) \,, 
\end{equation}
and let 
\begin{equation}\label{hdelta}
  H_\Delta = \left( \begin{array}{cc}  \left(-i h \nabla + h A(x) \right)^2 -\mu + h^2 W(x) 
& \Delta \\ \bar\Delta & - \left(i h \nabla + h A(x) \right)^2 +\mu - h^2 W(x) 
\end{array} \right)
\end{equation}
on $L^2(\R^3)\otimes \C^2$, with $A$ and $W$ satisfying
Assumption~\ref{as0}.  We shall also assume that $\mu<0$. In the
following, we will investigate the trace per unit volume of the
negative part of $H_\Delta$. Specifically, we are interested in the
effect of the off-diagonal term $\Delta$ in $H_\Delta$, in the
semiclassical regime of small $h$.

\begin{Theorem}\label{thm:scl}
  Let $[s]_- = \tfrac 12 \left( |s|-s \right)$ denote the negative part. 
  For $\mu<0$,  the diagonal entries of the $2\times 2$
  matrix-valued operator $[H_\Delta]_--[H_0]_-$ are
  locally trace class, and the sum of their traces per unit volume (which will be denoted by $\Trs$) 
  equals
  \begin{align}\nn
    \Tr_0\left( [H_\Delta]_- - [H_0]_-\right) & = - h^{-1} E_1 - h E_2 + O(h^{2}) \left(
      \|\psi\|^4_{H^1(\calC)} + \|\psi\|^2_{H^1(\calC)} \right) \\
    &\quad + O(h^3) \left( \|\psi\|_{H^1(\calC)}^6+
      \|\psi\|_{H^2(\calC)}^2\right)\,, \label{210}
  \end{align}
  where
  \begin{equation}\label{def:e1}
    E_1 =  -\frac {1} 2 \|\psi\|_2^2  
    \int_{\R^3} \frac{t(q)^2 }{q^2 - \mu} \,  \frac{dq}{(2\pi)^3} 
  \end{equation}
  and
  \begin{align}\nn
    E_2 &= - \frac {1} 8 \sum_{j,k=1}^3 \langle \partial_j
    \psi| \partial_k \psi\rangle \int_{\R^3} t(q)
    \left[ \partial_j \partial_k t\right]\!(q)\, \frac 1{q^2 - \mu} \,
    \frac{dq}{(2\pi)^3} \\ \nonumber & \quad + \left( \frac{ 1}8
     \left\| (\nabla + 2 i A) \psi\right\|_2^2 + \frac {1 }2 \langle \psi|
    W|\psi\rangle \right) 
    \int_{\R^3} \frac{t(q)^2}{(q^2 - \mu)^2}   \frac{dq}{(2\pi)^3}  
\\ & \quad  + \frac {1} 8 \|\psi\|_4^4
    \int_{\R^3} \frac{ t(q)^4}{(q^2 - \mu)^3} \,
    \frac{dq}{(2\pi)^3} \,. \label{def:e2}
  \end{align}
  The error terms in (\ref{210}) of order $h^2$ and $h^3$ depend on
  $t$ only via bounds on the expressions (\ref{t:as1}) and
  (\ref{t:as2}), and are uniform in $\mu$ for $\mu<0$ bounded away
  from zero.
\end{Theorem}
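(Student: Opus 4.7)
The plan is to expand $\Trs([H_\Delta]_- - [H_0]_-)$ in a Born series in the off-diagonal perturbation $\mathcal D := H_\Delta - H_0$. Since $\mu<0$ and $A,W$ are bounded, $H_0$ has a spectral gap of order $|\mu|$ about zero; using (\ref{t:as1})--(\ref{t:as2}) and the $H^1$ regularity of $\psi$ one checks that $\mathcal D$ is small enough (relative to the Laplacian) that the gap persists for $H_\Delta$ when $h$ is small. I then pick a rectangular contour $\Gamma$ in the resolvent set enclosing precisely the negative spectrum of both operators and represent
\begin{equation*}
  [H_\Delta]_- - [H_0]_- = -\frac{1}{2\pi i}\oint_\Gamma z\Bigl(\frac{1}{z-H_\Delta}-\frac{1}{z-H_0}\Bigr)\,dz.
\end{equation*}
Iterating the resolvent identity and noting that $\mathcal D$ is off-diagonal while I want the diagonal-block trace, only even powers of $\mathcal D$ contribute. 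I truncate at fourth order in $\mathcal D$; the Born remainder, estimated by H\"older in Schatten classes on the per-unit-volume trace, contributes at most $O(h^3)(\|\psi\|_{H^1(\calC)}^6+\|\psi\|_{H^2(\calC)}^2)$.

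The second-order term, after evaluation of the $z$-integral by residues, takes the schematic form $-\Trs(\Delta\,\mathcal K(\mathfrak{h},\bar{\mathfrak{h}})\,\bar\Delta)$, where $\mathfrak{h} := (-ih\nabla+hA)^2-\mu+h^2 W$ and $\mathcal K$ is an explicit rational function. I further expand $\mathfrak{h}$ about $\mathfrak{h}_0 := -h^2\nabla^2-\mu$ by another resolvent expansion, treating the $A$- and $W$-dependent pieces as perturbations of size $O(h)$. Assumption~\ref{as0} on the summability of $\widehat A$ and $\widehat W$ ensures absolute convergence of the resulting multiple Fourier series, while every extra resolvent costs only $|\mu|^{-1}$. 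The leading translation-invariant piece of the second-order Born term produces the $h^{-1}E_1$ contribution and, through one insertion of $W$, the $\langle\psi,W\psi\rangle$ piece of $E_2$; one insertion of $A$ (together with the $hA\cdot(-ih\nabla)$ commutators) generates the $\|(\nabla+2iA)\psi\|_2^2$ term; one commutator between $\psi(x)$ and $t(-ih\nabla)$ generates the $\langle\partial_j\psi,\partial_k\psi\rangle$ term; and the fourth-order Born term yields the $\|\psi\|_4^4$ term.

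For each surviving term I compute traces per unit volume of products of multiplication operators ($\psi$, $A$, $W$) sandwiched between smooth symbols of $-ih\nabla$. Passing to momentum variables and substituting $p=hq$ produces the overall factor $h^{-3}$ together with factorized integrals $\int_\calC F(x)\,dx\int_{\R^3}G(q)\,dq/(2\pi)^3$ that reproduce exactly the combinations in (\ref{def:e1})--(\ref{def:e2}), multiplied by the momentum kernels $t^2/(q^2-\mu)^n$, $t\,\partial_j\partial_k t/(q^2-\mu)^n$ and $t^4/(q^2-\mu)^n$. The key identity is the finite Taylor expansion $[\psi(x),t(-ih\nabla)] = -ih(\nabla\psi)(x)\cdot(\nabla t)(-ih\nabla)+O(h^2)$, whose $L^2$-remainders are controlled by hypotheses (\ref{t:as1})--(\ref{t:as2}).

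The main obstacle is the error bookkeeping: tracking which Sobolev norm of $\psi$ enters at each step and showing that all remainders aggregate into $O(h^2)(\|\psi\|_{H^1(\calC)}^4+\|\psi\|_{H^1(\calC)}^2)+O(h^3)(\|\psi\|_{H^1(\calC)}^6+\|\psi\|_{H^2(\calC)}^2)$, uniformly in $\mu<0$ bounded away from $0$. Here I expect to closely follow \cite[Sect.~2]{FHSS}: the Matsubara sum there is replaced by the $z$-contour integral above, and the Fermi-Dirac weight by the characteristic function of $(-\infty,0)$. Because the full spectrum of $H_\Delta$ remains gapped away from $0$, no singularity at the Fermi surface appears, and the zero-temperature argument is in fact slightly simpler than the positive-temperature one treated in \cite{FHSS}.
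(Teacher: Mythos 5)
Your proposal follows essentially the same route as the paper: both exploit the spectral gap of $H_\Delta$ and $H_0$ around zero (valid for $\mu<0$ and $h$ small) to write $[H_\Delta]_--[H_0]_-$ as a contour integral of $z$ times the resolvent difference, expand in the off-diagonal perturbation $\Delta$ keeping only even orders up to four, and then reduce the semiclassical bookkeeping to the positive-temperature analysis of \cite[Sect.~8]{FHSS} with the Fermi--Dirac weight replaced by the sharp cutoff in the $\beta\to\infty$ limit. Your identification of which Born orders produce $E_1$, the three pieces of $E_2$, and the $O(h^3)$ truncation error matches the paper's (sketched) argument, so the proposal is correct and not a genuinely different proof.
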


Here, we use the short-hand notation $\|\psi\|_p$ for the norm on
$L^p(\calC)$. Likewise, $\langle \,\cdot\, | \, \cdot\, \rangle$
denotes the inner product on $L^2(\calC)$.

In general, the operator $[ H_\Delta]_- - [H_0]_-$ is not trace class
under our assumptions on $t$ and $\psi$. Hence the trace in
(\ref{210}) has to be suitably understood as the sum of the traces of
the diagonal entries. This issue is further discussed in the next
section.

The proof of Theorem~\ref{thm:scl} is very similar to the proof of
Theorem~2 in \cite{FHSS}. In the following, we shall limit ourselves
to explaining the main differences.

\begin{proof}[Sketch of proof]
  Since $\mu < 0$ and $W(x)$ and $\psi(x)t(-ih\nabla)$ are bounded, both
  $H_\Delta$ and $H_0$ have, for small enough $h$, a gap around $0$ in
  the spectrum. Hence the projector onto the negative spectral
  subspace can be written via a contour integral as
\begin{equation}\label{the}
\theta(-H_\Delta) =  \frac 1{2\pi i}\int_{\ell}  \frac 1{z-H_\Delta} \, dz\,,
\end{equation}
where $\theta(t) = 1$ for $t\geq 1$ and $0$ otherwise, and $\ell$ is
the contour $\{r - i ,\, r\in (-\infty,0] \}\cup \{i r, \, r\in
[-1,1]\} \cup \{-r + i,\, r\in [0,\infty) \}$. The integral has to be
understood as a suitable weak limit over finite contours. Similarly,
one obtains that
 \begin{equation}\label{int:rep}
   [H_0]_- - [H_\Delta]_-  =  \frac 1{2\pi i}\int_{\ell} z \, 
\left( \frac 1{z-H_\Delta} - \frac 1{z-H_0} \right)\, dz\,.
\end{equation}
The remaining analysis proceeds as in \cite[Sect.~8]{FHSS} (compare
with Eq.~(8.11) there), and we shall not repeat it here. In \cite{FHSS}, the
factor $z$ on the right side of (\ref{int:rep}) is replaced by
$-\beta^{-1} \ln (1+e^{-\beta z})$ and the contour is around the whole
real axis. For $\beta\to \infty$, this reduces to (\ref{int:rep}),
given the gap in the spectrum around 0.
\end{proof}

Our second semiclassical estimate concerns the upper off-diagonal term
of the projection onto the negative spectral subspace of $H_\Delta$,
$\theta(-H_\Delta)$, which we denote by
$\alpha_\Delta$. We are interested in its $H^1$ norm. In general, we
define the $H^1$ norm of a periodic operator $O$ by
\begin{equation}\label{def:h1}
  \|O\|_{H^1}^2 = \Tr \left[ O^\dagger \left(1-h^2\nabla^2\right) O  \right] \,.
\end{equation}
In other words, $\|O\|_{H^1}^2 = \|O\|_2^2 + h^2 \|\nabla
O\|_2^2$. Note that this definition is not symmetric, i.e.,
$\|O\|_{H^1} \neq \|O^\dagger\|_{H^1}$ in general.

\begin{Theorem}\label{lem3} Let $\varphi(q) = \tfrac 12 t(q)/(q^2
  -\mu)$. Under the same assumptions as in Theorem~\ref{thm:scl}, we
  have
  \begin{equation}
    \left\| \alpha_\Delta - \tfrac h2 \left(\psi(x) \varphi(-ih\nabla) 
+ \varphi(-ih\nabla) \psi(x) \right) \right\|_{H^1}  
\leq C h^{3/2}  \left(  \|\psi\|_{H^2(\calC)} +  \|\psi\|^3_{H^1(\calC)}\right) \,.
  \end{equation}
\end{Theorem}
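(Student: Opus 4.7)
The plan is to mirror the contour integral method used for Theorem~\ref{thm:scl}, but now to track the upper off-diagonal block of $\theta(-H_\Delta)$ and to carry out the estimates in the $H^1$ norm (\ref{def:h1}). Writing $H_\Delta = H_0 + B$ with $B = \left(\begin{smallmatrix} 0 & \Delta \\ \bar\Delta & 0 \end{smallmatrix}\right)$ and iterating the resolvent identity inside (\ref{the}), only odd powers of $B$ contribute to the upper off-diagonal block, so
\begin{equation*}
  \alpha_\Delta = \alpha_\Delta^{(1)} + \alpha_\Delta^{(3)} + \dotsb,\qquad
  \alpha_\Delta^{(1)} = \frac{1}{2\pi i}\int_\ell \frac{1}{z - h_A}\,\Delta\,\frac{1}{z + \bar h_A}\, dz,
\end{equation*}
where $h_A = (-ih\grad + hA)^2 - \mu + h^2 W$ and $\bar h_A$ denotes its complex conjugate, so that the lower-right diagonal entry of $H_0$ is $-\bar h_A$. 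Since $\mu<0$, both $h_A$ and $\bar h_A$ have a positive spectral gap of order $|\mu|$ for small $h$, and contour deformation collapses the leading term to $\alpha_\Delta^{(1)} = -\int_0^\infty e^{-s h_A}\,\Delta\, e^{-s \bar h_A}\, ds$.

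The comparison with the target $\alpha_\Delta^{(0)} := \tfrac{h}{2}\bigl(\psi(x)\varphi(-ih\grad) + \varphi(-ih\grad)\psi(x)\bigr)$ splits into two pieces. First, I would replace $h_A$ by the free operator $h_0 := -h^2\grad^2 - \mu$ inside each heat kernel via Duhamel expansion in the perturbation $V_A := h_A - h_0$; its components carry explicit factors of $h^2$ paired with $\grad$, and after integration against the exponentially decaying semigroups this produces an $H^1$ error of order $h^{3/2}\|\psi\|_{H^2(\calC)}$, handled exactly as in \cite[Sect.~8]{FHSS}. Second, a direct momentum-space calculation shows that the integral kernel of $-\int_0^\infty e^{-s h_0}\,\Delta\, e^{-s h_0}\, ds$ equals $\tfrac{h}{2}\,\widehat\psi(p-p')(t(p)+t(p'))/(p^2+p'^2-2\mu)$, while that of $\alpha_\Delta^{(0)}$ is $\tfrac{h}{2}\,\widehat\psi(p-p')(\varphi(p)+\varphi(p'))$. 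The two symbols agree on the diagonal $p=p'$, so their difference is quadratic in $p-p'$; absorbing each factor $(p-p')$ into an $h\grad$ acting on $\psi$ gives the remaining $h^{3/2}\|\psi\|_{H^2(\calC)}$ contribution after accounting for the per-unit-volume Hilbert--Schmidt scaling.

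Finally, I would estimate each higher-order Born term $\alpha_\Delta^{(2k+1)}$, $k\geq 1$, in $H^1$ by combining H\"older-type inequalities for the trace per unit volume with Sobolev-Schatten mapping bounds on $\Delta$, following \cite[Sect.~8]{FHSS}. Each factor of $\Delta$ paired with flanking resolvents $(z\mp\bar h_A)^{-1}$ effectively contributes $O(h^{1/2}\|\psi\|_{H^1(\calC)})$ along the contour, using (\ref{t:as1})--(\ref{t:as2}) and the spectral gap; three such factors yield $\|\alpha_\Delta^{(3)}\|_{H^1} = O(h^{3/2}\|\psi\|_{H^1(\calC)}^3)$ after the $z$-integral, while higher $k$ are geometrically smaller. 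The hardest step is the first piece above: tracking the $h\grad_x$ derivative implicit in the $H^1$ norm (\ref{def:h1}) through the heat kernels $e^{-s h_A}$, since (\ref{def:h1}) is not symmetric between $O$ and $O^\dagger$. This is handled by a Duhamel identity relating $h\grad_x\, e^{-sh_A}$ to $e^{-sh_A}\,h\grad_x$ up to a commutator of order $h^2$ with favorable semigroup decay, giving integrability in $s$ and the advertised $h^{3/2}$ bound.
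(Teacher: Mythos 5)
Your proposal follows essentially the same route as the paper: the contour-integral representation (\ref{the}) of $\theta(-H_\Delta)$, a resolvent expansion whose first-order term with the free kinetic energy reproduces $\tfrac h2(\psi(x)\varphi(-ih\nabla)+\varphi(-ih\nabla)\psi(x))$ up to a symbol difference vanishing to second order on the diagonal, and separate $H^1$ estimates for the $A,W$-perturbation and for the contribution cubic in $\Delta$ --- these are precisely the paper's error terms $\eta_1$, $\eta_2$, $\eta_3$, bounded by the same appeal to \cite[Sect.~9]{FHSS}. The only substantive difference is that the paper terminates the expansion at third order, keeping the full resolvent $[\,(z-H_\Delta)^{-1}]_{22}$ in $\eta_3$ rather than summing an infinite Born series; this avoids having to justify geometric convergence (and preserves the stated dependence of the constant on $\|\psi\|_{H^2}$ and $\|\psi\|_{H^1}^3$) in regimes where $h\|\psi\|_{\infty}$ is not small.
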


The proof follows again along the same lines as the proof of the
corresponding Theorem~3 in \cite{FHSS}, and we shall only sketch the
differences.

\begin{proof}[Sketch of proof]
With the aid of (\ref{the}) we can write
\begin{equation}\label{rep:alp}
  \alpha_\Delta = \frac 1{2\pi i} \int_\ell  \left[ \frac 1{z-H_\Delta}\right]_{12}dz\,,
\end{equation}
where $[\,\cdot \, ]_{ij}$ stands for the $ij$ element of an
operator-valued matrix, and where the integral has to be suitably
understood as a weak limit, similarly to
(\ref{int:rep}). Alternatively, one could integrate over $-\ell$,
since the identity operator has vanishing off-diagonal terms.

Using the resolvent identity and the definitions of $\Delta$ and
$\varphi$ we find that
\begin{equation}
  \alpha_\Delta  = \frac{h}2 \left( \psi(x) \varphi(-i h \nabla ) 
+ \varphi(-ih \nabla)  \psi(x)\right)   + \sum_{j=1}^3 \eta_j \,,
\end{equation}
where
\begin{equation}\label{def:eta1}
  \eta_1 =  \frac h{4\pi i}\int_\ell  \left(\frac 1{z-k_0}\left[\psi, k_0\right] \frac{t}{z^2-k_0^2} 
+ \frac{t}{z^2-k_0^2} [\psi,k_0]\frac 1{z+k_0} \right) \, dz \,,
\end{equation}
\begin{equation}
  \eta_2 =  \frac 1{2\pi i } \int_\ell \frac 1{z-k_0} \left( (k-k_0) \frac 1{z-k}\Delta 
+ \Delta\frac 1{z+k_0} (k_0 - k)\right) \frac 1{z+k} \, dz
\end{equation}
and
\begin{equation}
  \eta_3 = \frac 1{2\pi i}  \int_\ell  \frac 1{z-k}\Delta \frac 1{z+k} \Delta^\dagger 
\frac 1{z-k} \Delta  \left[ \frac 1{ z-H_\Delta } \right]_{22} dz \,.
\end{equation}
Here, $t$ is short for the operator $t(-ih\nabla)$, $k=(-ih\nabla + h
A(x))^2 -\mu + h^2 W(x)$ and $k_0 = -h^2\nabla^2 -\mu$.

Proceeding as in \cite[Section~9]{FHSS} one sees that
  \begin{equation}
    \|\eta_1\|_{H^1} \leq C h^{3/2} \|\psi\|_{H^2(\calC)}\,,
  \end{equation}
  \begin{equation}
    \|\eta_2\|_{H^1} \leq C h^{3/2} \|\psi\|_{H^1(\calC)}
  \end{equation}
and 
  \begin{equation}
    \|\eta_3\|_{H^1} \leq  C h^{3/2} \|\psi\|_{H^1(\calC)}^3 \,.
  \end{equation}
  In the terms investigated in \cite{FHSS}, there is an additional
  factor $(1+e^{\beta z})^{-1}$ in the integrand, and the contour
  contains the whole real axis. Similarly as in the proof of
  Theorem~\ref{thm:scl}, this reduces to our case as $\beta\to
  \infty$.
\end{proof}

\section{Proof of Theorem~\ref{thm:main}: Upper Bound}\label{sec:up}

Recall that $\alpha_0$ denotes the unique ground state of $-\nabla^2 +
V(x)$, normalized as in (\ref{normalization}). It satisfies
$\alpha_0(x) = \alpha_0(-x)$, and we can take it to be real. In the
following, we let $t$ denote the Fourier transform of $2 (-\nabla^2 +
\eb) \alpha_0 = -2 V \alpha_0$, i.e.,
\begin{equation}\label{deft}
  t(q) = - 2 (2\pi)^{-3/2} \int_{\R^3} V(x) \alpha_0(x) e^{-iq\cdot x} dx 
  = 2 (q^2 + \eb) \widehat\alpha_0(q) \,.
\end{equation}
It satisfies all the assumptions in the previous section. In
particular, (\ref{t:as1}) and (\ref{t:as2}) hold for all $\gamma \in
\N_0^3$. This can be shown, for instance, in the same way as in
\cite[Sect.~4]{FHSS}. The method there also implies that
$\sqrt{|V(x)|}\alpha_0(x) e^{\kappa|x|} \in L^2(\R^3)$ for $\kappa<
\eb^{1/2}$, and that $\int_{\R^3} ( |x^\gamma \nabla\alpha_0(x)|^2 +
|x^\gamma \alpha_0(x)|^2 ) dx < \infty$ for all $\gamma \in
\N_0^3$. Some of these properties will be used later on.

As a trial state, we use
\begin{equation}\label{def:gammad2}
  \Gamma_\Delta = \left( \begin{array}{cc} \gamma_\Delta & \alpha_\Delta  
      \\ \bar\alpha_\Delta & 1-\bar\gamma_\Delta \end{array} \right) = \theta(-H_\Delta)
\end{equation}
where $H_\Delta$ is given in (\ref{hdelta}) with $\Delta$ as in
(\ref{def:delta}) and $\mu = -\eb$.  For $t$, we choose (\ref{deft}),
which is reflection symmetric and can be taken to be real.

We have
\begin{equation}\label{tci2} [H_0]_- - [H_\Delta]_- = H_\Delta
  \Gamma_\Delta - H_0 \Gamma_0 = \left( \begin{array}{cc} k
      \gamma_\Delta + \Delta \bar\alpha_\Delta & k \alpha_\Delta +
      \Delta ( 1- \bar\gamma_\Delta) \\ \bar\Delta \gamma_\Delta +
      \bar k \bar\alpha_\Delta & \bar k \bar\gamma_\Delta  +
      \bar\Delta \alpha_\Delta \end{array} \right)
\end{equation}
where $k$ denotes the upper left  entry of $H_\Delta$ (and $H_0$). From
(\ref{def:bcs}) and (\ref{tci2}) we conclude that 
\begin{align}\nonumber
\F^{\rm
  BCS}(\Gamma_\Delta) =  &-\frac 1{2}\, \Trs\left( [H_\Delta]_- -[H_0]_-\right) \\ \nonumber
  &- h^{-4} \int_{\calC\times \R^3} V(\tfrac {x-y}h)
\left|\tfrac 12 (\psi(x)+\psi(y))\alpha_0(\tfrac{x-y}h)\right|^2\, 
\frac{dx\,dy}{(2\pi)^3} \\
  & + \int_{\calC\times\R^3} V(\tfrac{x-y}h)\left|
    \frac{\psi(x)+\psi(y)}{2h^2(2\pi)^{3/2}}
    \alpha_0(\tfrac{x-y}h)-
    \alpha_\Delta(x,y)\right|^2\,{dx\,dy}\,, \label{fund:up}
\end{align}
where $\Trs$ stands for the sum of the traces per unit volume of the diagonal
entries of the $2\times 2$ matrix-valued operator. In general, the
operator $[H_0]_--[H_\Delta]_-$ is not trace class if $\Delta$ is not,
as can be seen from (\ref{tci2}). In the evaluation of $\F^{\rm
  BCS}(\Gamma_\Delta)$ only the diagonal terms of (\ref{tci2}) enter,
however.

The first term on the right side of (\ref{fund:up}) was calculated in
Theorem~\ref{thm:scl} above. Note that, for our choice of $t$, the
integral in the second term in (\ref{def:e2}) is equal to $4$, and
the integral in the third term is $16\, g$.

As in \cite[Sect.~5]{FHSS} we can rewrite the second term on the right
side of (\ref{fund:up}) as
\begin{align}\nonumber
  & - h^{-4} \int_{\R^3\times \calC} V(h^{-1}(x-y))\left|\tfrac 12
    (\psi(x)+\psi(y))\alpha_0(h^{-1}(x-y))\right|^2\, dx\,dy \\
  & = \frac {1 }{16 h} \sum_{p\in (2\pi\Z)^3} |\widehat \psi(p)|^2
  \int_{\R^3} \frac{t(q)}{q^2+\eb} \, \left(2 t(q) + t(q-hp) +
    t(q+hp)\right) \, dq\,. \label{term2}
\end{align}
Using the Taylor expansion
\begin{align}\nn
  & 2t(q) + t(q-hp) + t(q+hp) \\ & = 4 t(q) + {h^2} \left[ (p\cdot
    \nabla)^2 t\right]\! (q) + \frac{h^4}{6} \int_{-1}^1 \left[
    (p\cdot \nabla)^4t\right]\!(q+shp) (1-|s|)^3 \, ds
\end{align}
we see that (\ref{term2}) equals
\begin{equation}\label{3t}
  \frac {\| \psi\|_2^2 }{4h} \int_{\R^3} \frac{t(q)^2}{q^2+\eb}  \, dq 
+ \frac {h  }{16} \sum_{i,j=1}^3 \langle \partial_i
  \psi| \partial_j\psi\rangle \int_{\R^3} t(q)
  \left[ \partial_i\partial_j t\right]\! (q) \, \frac 1{q^2+\eb}
  \, dq + O(h^{3})\,,
\end{equation}
where the error term is bounded by
\begin{equation}
  C h^{3} \|\psi\|_{H^2}^2 \int_{\R^3} |V(x)| |\alpha_0(x)|^2 |x|^4 \, dx\,.
\end{equation}
Note that the first term in (\ref{3t}) cancels the contribution of
$h^{-1} E_1$ in (\ref{def:e1}) to the trace $\tfrac 1{2}\, \Tr(
[H_\Delta]_- -[H_0]_-)$.

It remains to investigate the last term in (\ref{fund:up}). Since $V$
is relatively bounded with respect to the Laplacian, we can bound the
term by an appropriate $H^1$ norm. Recall the definition of the $H^1$
norm of a periodic operator in (\ref{def:h1}).  For general periodic
operators $O$, we have the bound
\begin{equation}
  \left| \int_{\calC\times\R^3} V(\tfrac{x-y}h)\left| O(x,y)
    \right|^2\,dx\,dy \right| \leq  \left\|
    (1-\nabla^2)^{-1/2} V(\,\cdot\,) (1- \nabla^2)^{-1/2} \right\| \,
  \|O\|_{H^1}^2\,.
\end{equation}
The operator of relevance here is given by
\begin{equation}\label{relo}
  O  = \alpha_\Delta  - \tfrac h 2 \left ( \psi(x) \widehat\alpha_0(-ih\nabla) 
+ \widehat\alpha_0(-ih\nabla) \psi(x)\right) \,.
\end{equation} 
Note that, for our choice of $t$, we have $\varphi(q) = \tfrac 12
t(q)/(q^2 + \eb) = \widehat\alpha_0(q)$. Hence Theorem~\ref{lem3}
implies that the $H^1$ norm of (\ref{relo}) is bounded by $C h^{3/2} (
\|\psi\|_{H^2} +\|\psi\|_{H^1}^3)$.

For $\psi$, we shall take a minimizer of the GP functional
(\ref{GPfunct}). Under Assumption~\ref{as0} on $W$ and $A$, it is
easily seen to be in $H^2$. Collecting all the terms, we see that for
this choice of $\psi$ we have
\begin{equation}
  E^{\rm BCS}(-\eb) \leq \F^{\rm BCS}(\Gamma_\Delta)  \leq  h \left( E^{\rm GP} + C h \right)
\end{equation}
for small $h$.  This completes the proof of the upper bound.

\section{Proof of Theorem~\ref{thm:main}: Lower Bound}\label{sec:low1}

Our proof of the lower bound on $E^{\rm BCS}(-\eb)$ in
Theorem~\ref{thm:main} consists of two main parts.  The goal of this
first part is to show the following.

\begin{prop}\label{new:prop}
  Let $\Gamma$ be a state satisfying $\F^{\rm BCS}(\Gamma)\leq 0$, and
  let $\alpha$ denote its off-diagonal element. Then there exists a
  periodic function $\psi$, with $H^1(\calC)$ norm bounded independently
  of $h$, such that
  \begin{equation}\label{defpsinew}
    \alpha = \tfrac h 2 \big( \psi(x) \widehat \alpha_0(-ih\nabla) 
+ \widehat \alpha_0(-ih\nabla) \psi(x)\big)  + \xi
  \end{equation}
  with $\|\xi\|_{H^1} \leq O(h^{1/2})$ for small $h$.
\end{prop}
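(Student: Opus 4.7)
My plan mirrors the lower bound argument of \cite{FHSS}, adapted to zero temperature. The key is to exploit the spectral gap of the two-body relative Hamiltonian $H_{\rm rel} := -\nabla_r^2 + V(r) + \eb$, which is non-negative with kernel spanned by $\alpha_0$ and a positive gap $g_0>0$ above zero, in order to isolate the bound-state channel of the pair wave function.

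\emph{Coercivity.} Since $0 \leq \Gamma \leq 1$ gives $\alpha\bar\alpha \leq \gamma(1-\gamma) \leq \gamma$, and $k := (-ih\nabla + hA)^2 + \eb + h^2 W \geq 0$ for small $h$, one has $\Tr[k\gamma] \geq \Tr[k\alpha\bar\alpha]$. Symmetrizing via $\alpha(x,y) = \alpha(y,x)$ then gives
\begin{equation*}
\F^{\rm BCS}(\Gamma) \geq \int \overline{\alpha(x,y)}\left[\tfrac{1}{2}(k_x + k_y) + V(h^{-1}(x-y))\right]\alpha(x,y)\, dx\, dy\,.
\end{equation*}
Passing to center-of-mass $X = (x+y)/2$ and rescaled relative $r = (x-y)/h$, with $\tilde\alpha(X,r) := \alpha(X + hr/2, X - hr/2)$, and Taylor expanding $A,W$ about $X$, the bracketed operator becomes $H_{\rm rel}(r) + \tfrac{h^2}{4}(-i\nabla_X + 2A(X))^2 + h^2 W(X) + O(h^3)$ (with Jacobian $h^3$ from $d(x,y) = h^3\, dX\, dr$).

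\emph{Extraction of $\psi$ and estimate on $\xi$.} Let $P$ be the projection onto $\alpha_0$ in the $r$-variable, and define $\psi$ so that the $P$-component of $\tilde\alpha$ equals $(2\pi)^{-3/2}h^{-2}\psi(X)\alpha_0(r)$. The operator $\tfrac{h}{2}(\psi(x)\widehat\alpha_0(-ih\nabla) + \widehat\alpha_0(-ih\nabla)\psi(x))$ has kernel $(2\pi)^{-3/2}h^{-2}\psi(X)\alpha_0(r)$ up to a Taylor correction of higher order in $h$, so $\xi$ coincides with the $P^\perp$-component $\tilde\xi^\perp$ of $\tilde\alpha$ modulo this correction. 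Because $H_{\rm rel}\alpha_0 = 0$ and the $X$-operators commute with multiplication by $\alpha_0(r)$, the cross terms between $P$- and $P^\perp$-components vanish; combining $H_{\rm rel}|_{P^\perp} \geq g_0$ with $\F^{\rm BCS}(\Gamma) \leq 0$ yields
\begin{equation*}
h^3 g_0 \|\tilde\xi^\perp\|_{L^2}^2 + \tfrac{h^5}{4}\|\nabla_X\tilde\xi^\perp\|_{L^2}^2 + h^3 \langle \tilde\xi^\perp, H_{\rm rel}\tilde\xi^\perp\rangle \leq Ch(1 + \|\psi\|_{H^1}^2) + O(h^2)\,.
\end{equation*}
Form-boundedness of $V \in L^{3/2}$ by $-\nabla_r^2$ then controls $\|\nabla_r\tilde\xi^\perp\|$ through the last term. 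Converting back via $\nabla_{x,y} = \tfrac{1}{2}\nabla_X \pm h^{-1}\nabla_r$ and $\|\xi\|_2^2 = h^3\|\tilde\xi^\perp\|^2$ (up to the Taylor correction) gives $\|\xi\|_{H^1}^2 = O(h)$, provided $\|\psi\|_{H^1}$ is bounded uniformly in $h$.

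\emph{Main obstacle.} The delicate step is the uniform bound $\|\psi\|_{H^1(\calC)} \leq C$. The kinetic and $W$ terms alone yield only $\|\nabla\psi\|^2 \leq C\|\psi\|^2$, which does not control $\|\psi\|_2$. To close this I would extract a coercive quartic term $hg\|\psi\|_4^4$: this requires going beyond the simple inequality $\alpha\bar\alpha\leq\gamma$ and using the full operator structure of $\Gamma^2\leq\Gamma$, together with the semiclassical identity of Theorem~\ref{thm:scl} (which produced exactly this quartic with coefficient $g$ as in (\ref{def:b3})). Combined with $\F^{\rm BCS}(\Gamma)\leq 0$ and a H\"older-type bound $\|\psi\|_2 \leq |\calC|^{1/4}\|\psi\|_4$ on the torus, this closes the $H^1$ estimate on $\psi$ and completes the proof.
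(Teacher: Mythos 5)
Your skeleton is the right one and is essentially the paper's: coercivity from $\alpha\bar\alpha\leq\gamma(1-\gamma)$, the spectral gap of the two-body operator to isolate the $\alpha_0$-channel and control $\xi$, and then a quartic coercivity to pin down $\|\psi\|_2$. You also correctly flag the one genuinely delicate step, the uniform bound on $\|\psi\|_2$. But your proposed resolution of that step is not a proof, and the tool you reach for is the wrong one. Theorem~\ref{thm:scl} computes $\Trs([H_\Delta]_--[H_0]_-)$ for the \emph{specific} reference state $\Gamma_\Delta$ built from a given $\psi$; it requires $\psi\in H^2$ with controlled norm (which you do not yet have), and it says nothing about a general low-energy $\Gamma$ — relating $\F^{\rm BCS}(\Gamma)$ to $\F^{\rm BCS}(\Gamma_\Delta)$ is precisely the \emph{second} half of the lower-bound proof, which takes Proposition~\ref{new:prop} as input. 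Invoking it here is circular. The actual mechanism has an elementary part and a nontrivial part. Elementary: from $\alpha\bar\alpha\leq\gamma(1-\gamma)$ one keeps \emph{both} terms, $\Tr K^{A,W}\gamma\geq \Tr K^{A,W}\alpha\bar\alpha+\lambda^{A,W}\Tr\gamma^2\geq \Tr K^{A,W}\alpha\bar\alpha+\lambda^{A,W}\Tr(\alpha\bar\alpha)^2$; your version, which retains only $\Tr k\gamma\geq\Tr k\alpha\bar\alpha$, discards exactly the quartic trace $\|\alpha\|_4^4$ that you later need, and $\F^{\rm BCS}(\Gamma)\leq 0$ then gives the a priori bound $\|\alpha\|_4^4\leq Ch^2\|\alpha\|_2^2\leq Ch\|\psi\|_2^2$. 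Nontrivial: one needs Lemma~\ref{lem:a4} (quoted from \cite{FHSS}) to convert $\|\alpha\|_4$ \emph{from below} into $h^{1/4}\|\psi\|_4$ up to controlled errors, starting from the decomposition $\alpha=h\alpha_0\psi+\xi_0$ with $\|\xi_0\|_2\leq O(h)\|\alpha\|_2$. Together with $\|\psi\|_4\geq\|\psi\|_2$ on the unit torus this yields $h\|\psi\|_2^4\lesssim h\|\psi\|_2^2+\dots$ and hence $\|\psi\|_2\leq C$. Without this lemma, or a substitute for it, your argument does not close.

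Two smaller points. The paper avoids your center-of-mass Taylor expansion of $A$ and $W$: it replaces $K^{A,W}$ by $\tfrac12 K^{0,0}$ up to an $O(h^2)$ constant via the diamagnetic inequality, which sidesteps having to control expansion errors against the as-yet-unbounded $\psi$. Relatedly, your displayed bound $h^3g_0\|\tilde\xi^\perp\|^2+\dots\leq Ch(1+\|\psi\|_{H^1}^2)$ already uses $\|\psi\|_{H^1}$ in the error before it has been bounded; the paper is careful to first derive $\|\xi_0\|_2\leq O(h)\|\alpha\|_2$, $\|\alpha\|_2^2\leq(1+O(h^2))h^{-1}\|\psi\|_2^2$ and $\|\nabla\psi\|_2\leq C\|\psi\|_2$ purely in terms of $\|\alpha\|_2$ and $\|\psi\|_2$, and only then closes the loop with the quartic bound.
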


Recall the definition (\ref{def:h1}) for the $H^1$ norm of a periodic
operator. The bound $\|\xi\|_{H^1} \leq O(h^{1/2})$ has to be compared
with the $H^1$ norm of the first part of (\ref{defpsinew}), which is
$O(h^{-1/2})$ (for fixed $\psi\neq 0$.)

\begin{proof}
  Let $K^{A,W}$ denote the operator
  \begin{equation}\label{def:ktaw}
    K^{A,W} = \left(-i h \nabla + h A(x) \right)^2 +\eb + h^2 W(x)\,,
  \end{equation}
  and $\lambda^{A,W} = \infspec K^{A,W} \geq \eb - O(h^2)$.  For all
  admissible states $\Gamma$, we have
  \begin{equation}
   0\leq \alpha \bar \alpha \leq \gamma ( 1 - \gamma) \leq \gamma 
  \end{equation}
  and hence
  \begin{align}\nn
    \Tr K^{A,W} \gamma & \geq \Tr K^{A,W} \left( \alpha \bar \alpha +
      \gamma^2\right) \geq \Tr K^{A,W} \alpha \bar \alpha +
    \lambda^{A,W} \Tr \gamma^2 \\ &\geq \Tr K^{A,W} \alpha \bar \alpha
    + \lambda^{A,W} \Tr (\alpha \bar \alpha)^2\,.
  \end{align}
  In particular,
  \begin{equation}\label{eq:step1}
    \F^{\rm BCS}(\Gamma) \geq   \lambda^{A,W} \Tr ( \alpha\bar\alpha )^2 
+ \int_{\calC} \langle \alpha(\,\cdot\,,y) | K^{A,W} + V(h^{-1}(\,\cdot\,-y)) 
| \alpha(\,\cdot\,,y)\rangle \, {dy}\,.
  \end{equation}
  Here $K^{A,W}$ acts on the $x$ variable of $\alpha(x,y)$, and
  $\langle\, \cdot \, | \, \cdot \,\rangle$ denotes the standard inner
  product on $L^2(\R^3)$.

  By definition, the operator $ K^{0,0} + V(h^{-1}(\,\cdot\,-y))$ on
  $L^2(\R^3)$ has a unique ground state
  $h^{-3/2}\alpha_0(h^{-1}(x-y))$, with ground state energy zero, and
  a gap above. To utilize this fact, it will be convenient to replace
  $K^{A,W}$ by $K^{0,0}$ in (\ref{eq:step1}). We claim that
  \begin{equation}\label{new:claim}
    K^{A,W} + V(h^{-1}(\,\cdot\,-y)) \geq \frac 12 \left( K^{0,0} 
+ V(h^{-1}(\,\cdot\, -y) ) \right) -  h^2 \left( \|W\|_\infty + \|A\|_\infty^2\right)  \,.
  \end{equation}
  This follows immediately from the fact that
  \begin{align}\nn
    K^{A,W} + V (h^{-1}(\,\cdot\,-y)) & = \frac 12 \left( K^{0,0} + V
      (h^{-1}(\,\cdot\,-y)) \right) \\ & \quad + \frac 12 \left( K^{2
        A,0} + V (h^{-1}(\,\cdot\,-y)) \right) - h^2 A(x)^2 + h^2 W(x)
  \end{align}
  and that $K^{2A,0}+ V(h^{-1}(\,\cdot\,-y)) \geq 0$, by the
  diamagnetic inequality.

  For any state $\Gamma$ with $\F^{\rm BCS}(\Gamma)\leq 0$ we conclude
  from (\ref{eq:step1}) and (\ref{new:claim}) that
  \begin{equation}\label{eq1}
    \lambda^{A,W}  \Tr ( \alpha\bar\alpha )^2 
+ \frac 12 \int_{\calC} \langle \alpha(\,\cdot\,,y) | K^{0,0} 
+ V(h^{-1}(\,\cdot\,-y)) | \alpha(\,\cdot\,,y)\rangle \,  {dy}\leq C h^2 \|\alpha\|_2^2  \,.
  \end{equation}
  To show that this inequality implies (\ref{defpsinew}), we shall
  proceed as in \cite[Sect.~6]{FHSS}.

  Define $\psi$ to be the periodic function
  \begin{equation}
    \psi(y) =  \frac 1{(2\pi)^{3/2}h}  \int_{\R^3} \alpha_0(h^{-1}(x-y)) \alpha(x,y) \,dx\,.
  \end{equation}
  If we write
  \begin{equation}\label{dec1}
    \alpha(x,y)= \frac{1}{(2\pi)^{3/2}h^2} \alpha_0(h^{-1}(x-y))\psi(y)  + \xi_0(x,y)
  \end{equation}
  the gap in the spectrum of $K^{0,0} + V(h^{-1}(\,\cdot\,-y))$ above
  zero, together with (\ref{eq1}) and the normalization
  (\ref{normalization}), yields the bound $\|\xi_0\|_2 \leq
  O(h)\|\alpha\|_2$. We can also symmetrize and write
  \begin{equation}
    \alpha(x,y) = \frac{ \psi( x)+\psi(y)}{2(2\pi)^{3/2}h^2} \alpha_0(h^{-1}(x-y)) + \xi(x,y) \,,
  \end{equation}
  again with $\|\xi\|_2\leq O(h)\|\alpha\|_2$. In order to complete
  the proof of (\ref{defpsinew}), we need to show that
  $\|\psi\|_{H^1}$ is bounded independently of $h$, and that the $H^1$
  norm of $\xi$ is bounded by $O(h^{1/2})$.

  An application of Schwarz's inequality yields
  \begin{equation}
    \int_\calC | \psi(x)|^2\, dx \leq  h  \|\alpha\|_2^2  
\leq   \int_\calC |\psi(x)|^2 dx  + h \|\xi_0\|_2^2\,.
  \end{equation}
  Since $\|\xi_0\|_2\leq O(h) \|\alpha\|_2$, this implies that
  \begin{equation}\label{schw1}
    \|\alpha\|_2^2 \leq (1 + O(h^2)) \frac 1 h  \int_\calC |\psi(x)|^2 dx \,.
  \end{equation}
  Again by using Schwarz's inequality,
  \begin{equation}\label{schw2}
    \int_\calC |\nabla \psi(x)|^2\, dx \leq  \frac 1 h  \int_{\R^3\times \calC}  
\left| \left(\nabla_x + \nabla_y\right) \alpha(x,y)\right|^2 \, dx\,dy  \,.
  \end{equation}
  The latter expression can be bounded as
  \begin{equation}\label{lem:eq:com}
    \int_{\R^3\times \calC} \left| \left(\nabla_x +
        \nabla_y\right) \alpha(x,y)\right|^2 \, dx\,dy  \leq \frac 4{h^2} 
    \int_{\calC} \langle \alpha(\,\cdot\,,y) | K^{0,0} +
    V(\tfrac {\,\cdot\,-y}h) | \alpha(\,\cdot\,,y)\rangle \,
    dy \,.
  \end{equation}
  To see this, expand $\alpha(x,y)$ in a Fourier series
  \begin{equation}
    \alpha(x,y) = \sum_{ p \in (2\pi \Z)^3} e^{i p\cdot (x+y)/2} \widetilde \alpha_p(x-y) \,.
  \end{equation}
  Using that $\widetilde \alpha_p(x) = \widetilde \alpha_p(-x)$ for
  all $p \in (2\pi \Z)^3$ we see that (\ref{lem:eq:com}) is equivalent
  to
  \begin{equation}
    K^{\frac 1 2 p,0}  + K^{-\frac 1 2 p,0} +2 \, V(x/h) \geq \frac 12 h^2 p^2 \,.
  \end{equation}
  This holds, in fact, for all $p\in \R^3$ since the left side is
  equal to $2 K^{0,0} + \frac 12 h^2 p^2 + 2\eb$.

  By combining (\ref{lem:eq:com}) with (\ref{schw1}), (\ref{schw2})
  and (\ref{eq1}) we see that $\|\nabla\psi\|_2$ is bounded by a
  constant times $\|\psi\|_2$. To conclude the uniform upper bound on
  the $H^1$ norm of $\psi$, it thus suffices to give a bound on the
  $L^2$ norm. To do this, we have to utilize the first term on the
  left side of Eq.~(\ref{eq1}).

  Eq.~(\ref{dec1}) states that $\alpha$ can be decomposed as $\alpha =
  h \alpha_0 \psi + \xi_0$, where $\alpha_0$ is short for the operator
  $\widehat\alpha_0(-ih \nabla)$. The following lemma was proved in
  \cite[Lemma~6]{FHSS}.  It gives a lower bound on $ ( \Tr
  (\alpha\bar\alpha)^2 )^{1/4}$, the $4$-norm of $\alpha$. This bound holds
  under appropriate decay and smoothness assumptions on $\alpha_0$
  which are satisfied in our case. (See the discussion at the
  beginning of Section~\ref{sec:up}.)

\begin{Lemma}\label{lem:a4}
  For some $0<C<\infty$ we have
  \begin{align}\nn
    \|\alpha\|_4 & \geq \left[ h \int_\calC |\psi(x)|^4 \, dx
      \int_{\R_3} \widehat\alpha_0(q)^4 \, \frac{dq}{(2\pi)^3} - C
      h^{2} \|\psi\|_{H^1(\calC)}^4 \right]_+^{1/4} \\ & \quad - C
    h^{1/4} \|\psi\|_2^{1/2} \left ( 1 + C h^{1/4} \|\psi\|_4
    \right)^{1/2}\,,\label{eq:lem:a4}
  \end{align}
  where $[\,\cdot\,]_+ = \max\{ 0, \,\cdot\,\}$ denotes the positive
  part.
\end{Lemma}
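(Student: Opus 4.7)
The plan is to apply a Schatten $4$-norm triangle inequality to the decomposition~(\ref{dec1}):
\begin{equation*}
\|\alpha\|_4 \geq \|h\alpha_0\psi\|_4 - \|\xi_0\|_4.
\end{equation*}
The first term on the right will reproduce the semiclassical leading order $h\int_\calC|\psi|^4\int\widehat\alpha_0(q)^4\,dq/(2\pi)^3$, giving the positive bracket inside~(\ref{eq:lem:a4}), while the $L^4$ norm of the error $\xi_0$ will yield the remainder involving $\|\psi\|_2$ and $\|\psi\|_4$.

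For the main term I would use cyclicity of the trace per unit volume to write
\begin{equation*}
\|h\alpha_0\psi\|_4^4 = h^4\,\Trs\bigl(|\psi|^2\alpha_0^2|\psi|^2\alpha_0^2\bigr),
\end{equation*}
where $\alpha_0$ denotes $\widehat\alpha_0(-ih\nabla)$. A semiclassical computation of exactly the type carried out in Theorem~\ref{thm:scl} then expands this trace in powers of $h$, the leading order being the advertised integral. Each time $|\psi|^2$ is commuted past $\alpha_0^2(-ih\nabla)$, one picks up a factor of $h$ together with a derivative acting on either $|\psi|^2$ or $\widehat\alpha_0^2$; using the decay and smoothness of $\widehat\alpha_0$ recalled at the beginning of Section~\ref{sec:up}, the remainder from these commutators is controlled by $Ch^2\|\psi\|_{H^1(\calC)}^4$. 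Taking the positive part (necessary when the commutator error dominates for small $\psi$) produces the first bracket in~(\ref{eq:lem:a4}).

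For the remainder $\xi_0$ my plan is to interpolate between the $L^2$ and operator norms via
\begin{equation*}
\|\xi_0\|_4 \leq \|\xi_0\|_2^{1/2}\|\xi_0\|_\infty^{1/2},
\end{equation*}
which follows from $\Trs(\xi_0^*\xi_0)^2 \leq \|\xi_0\|_\infty^2\|\xi_0\|_2^2$. The $L^2$ bound $\|\xi_0\|_2 \leq O(h^{1/2})\|\psi\|_2$ is at hand from~(\ref{schw1}) combined with the estimate $\|\xi_0\|_2 \leq O(h)\|\alpha\|_2$ established right after~(\ref{dec1}). For the operator norm, use $\|\alpha\|_\infty \leq \half$ (from $\alpha\bar\alpha \leq 1/4$) together with $\|h\alpha_0\psi\|_\infty \leq \|h\alpha_0\psi\|_4 \leq Ch^{1/4}\|\psi\|_4$, the last bound being immediate from the first step. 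This yields exactly the correction $Ch^{1/4}\|\psi\|_2^{1/2}(1+Ch^{1/4}\|\psi\|_4)^{1/2}$ in~(\ref{eq:lem:a4}).

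The hard part will be the semiclassical trace identity in the second step: one must control all remainders uniformly in $\psi$ through the $H^1(\calC)$ norm alone, which relies on the full strength of the decay and smoothness of $\widehat\alpha_0$ established in [FHSS, Sect.~4]. This is exactly the calculation carried out in the corresponding [FHSS, Lemma~6] that the authors invoke here.
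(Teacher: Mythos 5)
The paper does not prove this lemma itself; it simply invokes \cite[Lemma~6]{FHSS}, and your reconstruction is, in its architecture, exactly that argument: the Schatten triangle inequality applied to the decomposition $\alpha = h\alpha_0\psi+\xi_0$ from (\ref{dec1}), a semiclassical evaluation of $\|h\alpha_0\psi\|_4^4=h^4\Trs\bigl(|\psi|^2\alpha_0^2|\psi|^2\alpha_0^2\bigr)$ for the bracket, and the interpolation $\|\xi_0\|_4\leq\|\xi_0\|_2^{1/2}\|\xi_0\|_\infty^{1/2}$ with $\|\xi_0\|_2\leq O(h^{1/2})\|\psi\|_2$ and $\|\xi_0\|_\infty\leq \|\alpha\|_\infty+\|h\alpha_0\psi\|_\infty$ for the remainder --- indeed the shape of the correction term $Ch^{1/4}\|\psi\|_2^{1/2}(1+Ch^{1/4}\|\psi\|_4)^{1/2}$ is an unmistakable fingerprint of precisely this interpolation. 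Two small points deserve care. First, the bound $\|h\alpha_0\psi\|_4\leq Ch^{1/4}\|\psi\|_4$ is better obtained directly from a Kato--Seiler--Simon type inequality for $g(-ih\nabla)f(x)$ in the trace-per-unit-volume setting (or from $\Tr(ABAB)\leq\Tr(A^2B^2)$ for $A,B\geq 0$), rather than ``from the first step,'' which by itself only yields $Ch^{1/4}\|\psi\|_4+Ch^{1/2}\|\psi\|_{H^1}$. Second, your heuristic that ``each commutator picks up a factor of $h$ and a derivative'' is too naive as stated: taken literally it produces quantities like $\|\nabla|\psi|^2\|_2^2$, which are \emph{not} controlled by $\|\psi\|_{H^1(\calC)}^4$ in three dimensions; closing the estimate with only $H^1$ control requires an additional interpolation (splitting the Fourier weight $\min(1,h^2|k|^2)$ and using H\"older with $L^6$--$L^3$ exponents plus Sobolev to trade one power of $h$ for the missing derivative). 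Since you explicitly defer that computation to \cite[Lemma~6]{FHSS} --- exactly as the paper does for the whole lemma --- this is not a gap, but it is the genuinely nontrivial step and should not be presented as a routine commutator expansion.
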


The fact that $\|\nabla\psi\|_2\leq C \|\psi\|_2$ also implies that
$\|\psi\|_4 \leq C \|\psi\|_2$ via Sobolev's inequality for functions
on the torus. If we use also that $\|\psi\|_4 \geq \|\psi\|_2$ we
conclude from (\ref{eq:lem:a4}) that $\|\alpha\|_4 \geq C h^{1/4}
(\|\psi\|_2 - C \|\psi\|_2^{1/2})$ for $h$ small enough. In
combination with (\ref{eq1}) and (\ref{schw1}) this implies that
$\|\psi\|_2 \leq C$. This shows that the $H^1$ norm of $\psi$ is
indeed uniformly bounded.

It follows that $\|\xi\|_2 \leq O(h^{1/2})$. To conclude the proof of
(\ref{defpsinew}), we need to show that also $\|\xi\|_{H^1} \leq
O(h^{1/2})$. We can write
\begin{equation}
  \xi(x,y) = \xi_0(x,y) + \frac{\psi(x) - \psi(y)}{2(2\pi)^{3/2}h^2} \alpha_0(h^{-1}(x-y)) \,.
\end{equation}
From the definition (\ref{dec1}) it follows easily that $\|
\xi_0\|_{H^1}\leq O(h^{1/2})$, using that $-h^2\nabla^2$ is relatively
bounded with respect to $K^{0,0} + V(h^{-1}(\,\cdot\,-y)) + 1$. If we
use the boundedness of the $H^1$ norm of $\psi$ and, moreover,
\begin{align}\nn
  &h^{-3} \int_{\calC\times\R^3} |\psi(x)-\psi(y)|^2 |\nabla
  \alpha_0(h^{-1}(x-y))|^2\,dx \, dy \\ & = 4 \sum_{p\in (2\pi\Z)^3}
  |\widehat \psi(p)|^2 \int_{\R^3} |\nabla \alpha_0(x)|^2 \sin^2\left(
    \tfrac 12 h p\cdot x \right) \, dx \leq O(h^{2})
\end{align}
(since $\int |\nabla\alpha_0|^2 |x|^2dx$ is finite), the bound on the
$H^1$ norm of $\xi$ readily follows. This completes the proof of
Proposition~\ref{new:prop}.
\end{proof}

Given Proposition~\ref{new:prop}, the proof of the lower bound on the ground state energy is very
similar to the corresponding one in \cite[Sect.~7]{FHSS}.  Let
$\Gamma$ be a state with $\F^{\rm BCS}(\Gamma) \leq 0$, and let $\psi$
be the function defined by the decomposition (\ref{defpsinew}). In
order to be able to apply Theorems~\ref{thm:scl} and \ref{lem3}, we
have to make sure that $\psi$ is in $H^2$. For this purpose, we pick
some $\epsilon>0$ with $h<\epsilon<1$ and define $\psi_<$ via its
Fourier coefficients
\begin{equation}
  \widehat
  \psi_<(p) = \widehat \psi(p) \theta(\epsilon h^{-1} - |p|) \,.
\end{equation}
The function $\psi_<$ is thus smooth, and $\|\psi_<\|_{H^2}\leq C
\epsilon h^{-1}$ since $\psi$ is bounded in $H^1$.

Let also $\psi_> = \psi - \psi_<$. Since $\psi$ is bounded in $H^1$,
the $L^2(\calC)$ norm of $\psi_>$ is bounded by $O(h \epsilon^{-1})$.
We absorb the part $\frac
12(\psi_>(x)+\psi_>(y))\alpha_0(h^{-1}(x-y))$ into $\xi$, and write
\begin{equation}\label{defpsi}
  \alpha(x,y) =  \frac{ \psi_<(x)+\psi_<(y)} { 2 (2\pi)^{3/2}h^2} \alpha_0(h^{-1}(x-y)) +\sigma(x,y)
\end{equation}
where
\begin{equation}\label{def:sigma}
  \sigma (x,y) = \xi(x,y) + \frac{ \psi_>(x)+\psi_>(y)}{2(2\pi)^{3/2}h^2} \alpha_0(h^{-1}(x-y)) \,.
\end{equation}
Proposition~\ref{new:prop} shows that $\|\xi\|_{H^1}\leq O(h^{1/2})$.
From the bound $\|\psi_>\|_2 \leq O(h\epsilon^{-1})$ it thus follows
that $\|\sigma\|_2 \leq O(h^{1/2}\epsilon^{-1})$. We cannot conclude
the same bound for the $H^1$ norm of $\sigma$, however.

As in (\ref{def:delta}), let $\Delta$ denote the operator $\Delta =
-\frac 12 (\psi_<(x) t(-ih\nabla) + t(-ih\nabla) \psi_<(x))$. The
function $t$ is given in (\ref{deft}), as in  the
previous section. Let $H_\Delta$ be the corresponding Hamiltonian
defined in (\ref{hdelta}). We can write
\begin{align}\nonumber
  \F^{\rm BCS}(\Gamma) & = -\frac 1 2\, \Trs\left( [ H_\Delta]_- -
    [H_0]_- \right) \\ \nonumber & \quad - \frac{1}{4h^4}
  \int_{\calC\times \R^3} V(\tfrac{x-y}h) \left|
    \psi_<(x)+\psi_<(y)\right|^2 |\alpha_0(\tfrac{x-y}h)|^2\,
  \frac{dx\,dy}{(2\pi)^3}\\ & \quad + \frac 12\, \Trs  H_\Delta
    (\Gamma-\Gamma_\Delta)  + \int_{\calC\times \R^3}
  V(h^{-1}(x-y))|\sigma(x,y)|^2\, {dx\,dy}\,, \label{off}
\end{align}
where $\Trs$ denotes again the sum of the trace per unit
volume of the diagonal entries, as in (\ref{fund:up}).

The terms in the first two lines on the right side of (\ref{off}) have
already been calculated. The first term is estimated in
Theorem~\ref{thm:scl}, and a bound on the second term was derived in
Section~\ref{sec:up} on the upper bound.  Using the fact that the
$H^1$ norm of $\psi_<$ is uniformly bounded, as well as
$\|\psi_<\|_{H^2} \leq C \epsilon/h$, we obtain the lower bound
\begin{align}\nonumber
  \F^{\rm BCS}(\Gamma) &\geq h \left( \E^{\rm GP}(\psi_<) - C (
    h+\epsilon^2) \right) \\ & \quad + \frac 12\, \Trs  H_\Delta
    (\Gamma-\Gamma_\Delta)  + \int_{\calC\times \R^3}
  V(h^{-1}(x-y))|\sigma(x,y)|^2\, {dx\,dy}\,. \label{lb2:s1}
\end{align}
It remains to show that the terms in the last line of (\ref{lb2:s1})
are negligible, i.e., of higher order than $h$, for an appropriate
choice of $\epsilon \ll 1$. We shall use the following lemma, whose proof is inspired by \cite[Lemma~1]{HLS}.

\begin{Lemma}
For all $0\leq \Gamma\leq 1$ with $(-\nabla^2+1)\gamma$ trace class, we have
\begin{equation}\label{eqs2}
  \Trs  H_\Delta (\Gamma-\Gamma_\Delta)   \geq   
\Tr \left( \Gamma-\Gamma_\Delta\right) |H_\Delta|\left( \Gamma-\Gamma_\Delta\right)  \,.
\end{equation}
\end{Lemma}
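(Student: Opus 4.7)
The approach is to exploit the spectral decomposition of $H_\Delta$. Since $\mu<0$ and the remaining terms in $H_\Delta$ are relatively bounded with respect to $-h^2\nabla^2$, the operator $H_\Delta$ has a spectral gap around zero for small $h$, so $\Gamma_\Delta = \theta(-H_\Delta)$ is the orthogonal projector $P_-$ onto its negative spectral subspace. Writing $P_+ = \id - P_-$, we have $H_\Delta = |H_\Delta|(P_+ - P_-)$ and $P_\pm$ commute with $|H_\Delta|$. The plan is to compute both sides of the desired inequality in the block decomposition induced by $P_\pm$ and show that their difference is a manifestly non-negative trace.

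Setting $Q = \Gamma-\Gamma_\Delta$ and noting $P_- Q P_- = P_-\Gamma P_- - P_-$, the block-diagonality of $|H_\Delta|$ kills the off-diagonal $P_\pm$ components of $Q$ in both traces, leaving
\begin{align*}
\Trs H_\Delta Q &= \Tr|H_\Delta|\bigl[P_+\Gamma P_+ + P_-(1-\Gamma)P_-\bigr], \\
\Tr Q|H_\Delta|Q &= \Tr|H_\Delta|\bigl[(P_+\Gamma P_+)^2 + P_+\Gamma P_-\Gamma P_+ \\ &\qquad + P_-\Gamma P_+\Gamma P_- + (P_-(1-\Gamma)P_-)^2\bigr].
\end{align*}
Inserting $P_+ + P_- = \id$ into $\Gamma^2$ gives the identity $P_+\Gamma^2 P_+ = (P_+\Gamma P_+)^2 + P_+\Gamma P_-\Gamma P_+$, so the $P_+$ contribution to the difference of the two traces collapses to $\Tr|H_\Delta|P_+\Gamma(1-\Gamma)P_+$. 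The analogous manipulation on the $P_-$ side, with $1-\Gamma$ in place of $\Gamma$ and using $P_-(1-\Gamma)P_+(1-\Gamma)P_- = P_-\Gamma P_+\Gamma P_-$ (since $P_-P_+=0$), yields $\Tr|H_\Delta|P_-\Gamma(1-\Gamma)P_-$. Summing and reabsorbing $P_++P_-=\id$ gives the key identity
\[
\Trs H_\Delta(\Gamma-\Gamma_\Delta) - \Tr(\Gamma-\Gamma_\Delta)|H_\Delta|(\Gamma-\Gamma_\Delta) = \Tr|H_\Delta|\,\Gamma(1-\Gamma),
\]
and the right-hand side is non-negative because $0\leq\Gamma\leq 1$ forces $\Gamma(1-\Gamma)\geq 0$; cyclicity rewrites it as $\Tr|H_\Delta|^{1/2}\Gamma(1-\Gamma)|H_\Delta|^{1/2}$, the trace of a positive operator.

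The main technical hurdle is to justify these trace manipulations under the sole hypothesis that $(-\nabla^2+\id)\gamma$ is trace class — this is precisely why $\Trs$ (rather than $\Tr$) appears on the left, and why every algebraic step must stay within combinations of $\Gamma$ with the projection $\Gamma_\Delta$. The $L^{3/2}$-bound on $V$ and the domination of $H_\Delta$ by $-h^2\nabla^2 + \const$ on each $2\times 2$ block, together with $\alpha\bar\alpha\leq\gamma$, suffice to ensure that both $|H_\Delta|\gamma$ and $|H_\Delta|(1-\bar\gamma)$ are trace class per unit volume; this legitimates the block computations and the use of cyclicity in the Floquet calculus of \cite[Sect.~3]{FHSS}.
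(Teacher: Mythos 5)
Your proposal is correct and follows essentially the same route as the paper: both pass to the spectral blocks $P_\pm$ of $H_\Delta$ and reduce the claim to the positivity of $\Gamma(1-\Gamma)$ (the paper phrases this as $\Gamma\geq\Gamma^2$ and $1-\Gamma\geq(1-\Gamma)^2$ applied inside $\sqrt{H_\Delta^\pm}\,(\cdot)\,\sqrt{H_\Delta^\pm}$, which yields exactly the remainder $\Tr |H_\Delta|\,\Gamma(1-\Gamma)$ you exhibit as an identity). The ``technical hurdle'' you flag --- that $\Trs$ is defined via the $\Gamma_0$-blocks and must be shown to agree with the $P_\pm$-block computation --- is precisely what the paper carries out by writing two explicit error terms $E_1,E_2$ and checking that they are trace class with cancelling traces.
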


\begin{proof} 
Recall that $\Gamma_\Delta$ is the projection onto the negative spectral
subspace of $H_\Delta$. Moreover, for any operator $A$, $\Trs A = \Tr \Gamma_0 A \Gamma_0 + \Tr (1-\Gamma_0) A (1-\Gamma_0)$. A simple calculation shows that
\begin{align}\nonumber
& \Gamma_0 H_\Delta(\Gamma-\Gamma_\Delta) \Gamma_0  + (1-\Gamma_0) H_\Delta(\Gamma-\Gamma_\Delta) (1- \Gamma_0)  \\ & = H_\Delta^+ \Gamma ( 1- \Gamma_\Delta)  + H_\Delta^- (1-\Gamma) \Gamma_\Delta - E_1 - E_2 \label{eqw}
\end{align}
with $H_\Delta^\pm$ denoting the positive and negative parts of $H_\Delta$, respectively, 
\begin{equation}
E_1 = (\Gamma_0 - \Gamma_\Delta ) H_\Delta (\Gamma-\Gamma_\Delta) (1 - 2 \Gamma_0)
\end{equation}
and 
\begin{equation}
E_2 =  |H_\Delta| (\Gamma-\Gamma_\Delta) (\Gamma_0 - \Gamma_\Delta )\,.
\end{equation}
It is easy to see that $(\Gamma_0-\Gamma_\Delta ) |H_0|^{1/2}$ and $|H_0|^{1/2}(\Gamma-\Gamma_\Delta)$ are Hilbert-Schmidt; since $\Delta$ is bounded, also $(\Gamma_0-\Gamma_\Delta ) |H_\Delta|^{1/2}$ and $|H_\Delta|^{1/2}(\Gamma-\Gamma_\Delta)$ are Hilbert-Schmidt. Hence $E_1$ is trace class and, by cyclicity, its trace is equal to the one of 
\begin{align}\nonumber
\widetilde E_1 & = \sqrt{H_\Delta^+} (\Gamma-\Gamma_\Delta) (1-2\Gamma_0) (\Gamma_0-\Gamma_\Delta) \sqrt{H_\Delta^+} \\ \nonumber & \quad  - \sqrt{H_\Delta^-} (\Gamma-\Gamma_\Delta) (1-2\Gamma_0) (\Gamma_0 - \Gamma_\Delta ) \sqrt{H_\Delta^-} \\ & = - \sqrt{H_\Delta^+} (\Gamma-\Gamma_\Delta)( \Gamma_0 -\Gamma_\Delta) \sqrt{H_\Delta^+} - \sqrt{H_\Delta^-} (\Gamma-\Gamma_\Delta)(\Gamma_0-\Gamma_\Delta) \sqrt{H_\Delta^-}\,.
\end{align}

Via the Floquet decomposition, $H_\Delta$ can be written as a direct integral of operators on $L^2(\calC)$ each of which has discrete spectrum. Since we know, a priori, that (\ref{eqw}) is trace class, we can evaluate the trace in the basis given by $H_\Delta$. With this understanding of the trace, we have $\Tr [\widetilde E_1 + E_2] =0$, and thus  
\begin{align}\nn
 \Trs H_\Delta(\Gamma-\Gamma_\Delta)  & = \Tr \left[ H_\Delta^+ \Gamma ( 1- \Gamma_\Delta)  + H_\Delta^- (1-\Gamma) \Gamma_\Delta \right] \\ & =\Tr \left[ \sqrt{H_\Delta^+} \Gamma \sqrt{H_\Delta^+} + \sqrt{H_\Delta^-} ( 1-\Gamma) \sqrt{H_\Delta^-}\right]\,.
\end{align}
The operators on the last line are positive, hence they are trace class.  Estimating $\Gamma\geq \Gamma^2$ and $1-\Gamma \geq (1-\Gamma)^2$, respectively, gives the desired bound
\begin{align}\nn
 \Trs  H_\Delta(\Gamma-\Gamma_\Delta)  & \geq \Tr \left[ \sqrt{H_\Delta^+} \Gamma^2 \sqrt{H_\Delta^+} + \sqrt{H_\Delta^-} ( 1-\Gamma)^2 \sqrt{H_\Delta^-}\right] \\ & = \Tr \left[ \sqrt{H_\Delta^+} (\Gamma-\Gamma_\Delta)^2 \sqrt{H_\Delta^+} + \sqrt{H_\Delta^-} ( \Gamma-\Gamma_\Delta)^2 \sqrt{H_\Delta^-}\right]\,,
\end{align}
which agrees with the right side of (\ref{eqs2}). 
\end{proof}

An application of Schwarz's inequality yields
\begin{equation}\label{hhd}
  H_\Delta^2 \geq (1-\eta) H_0^2 - \eta^{-1}  \|\Delta\|_\infty^2 
\end{equation}
for any $\eta>0$. Schwarz's inequality can also be used to obtain a
lower bound on $H_0^2$. For any $0<\delta<1$,
\begin{multline}
  \left[ \left( -ih\nabla + h A(x) \right)^2 + \eb + h^2 W(x)
  \right]^2 \geq (1-\delta)^2 \left[ -h^2\nabla^2 + \eb \right]^2 \\ -
  \frac 1 \delta \left[ \left( -ih^2\nabla \cdot A(x) - i h^2
      A(x)\cdot \nabla\right)^2 + h^4 \left( W(x) + A(x)^2\right)^2
  \right] \,.
\end{multline}
We can further bound
\begin{align}\nn
  & \left( -ih^2\nabla \cdot A(x) - i h^2 A(x)\cdot \nabla\right)^2 \\
  \nn & = \left( -2ih^2 \nabla\cdot A(x) + i h^2 \dive A(x) \right)
  \left( -2 i h^2 A(x)\cdot \nabla - i h^2 \dive A(x) \right) \\ & \leq
  8 h^4 \nabla \cdot A(x) A(x) \cdot \nabla + 2 h^4 \left( \dive A(x)
  \right)^2 \,.
\end{align}
Since $A$ is $C^1$ by assumption, this is bounded from above by $C h^4
(- \nabla^2 +1)$. Choosing $\delta = O(h)$, we thus conclude that
\begin{equation}
  H_0^2 \geq  (1-O(h)) [-h^2\nabla^2 +\eb]^2\otimes \id_{\C^2} \,.
\end{equation}

The operator monotonicity of the square root implies that
\begin{equation}
  K^{0,0}\otimes \id_{\C^2} \leq  (1-\eta-O(h))^{-1/2}
  \sqrt{H_\Delta^2 + \eta^{-1}\|\Delta\|_\infty^2}  \,.
\end{equation}
Using again (\ref{hhd}) and the fact that $H_0^2 \geq
(\lambda^{A,W})^2 \geq O(1)$, the choice $\eta = O(\|\Delta\|_\infty)$
gives
\begin{equation}
  |H_\Delta| \geq (1-O(h+\|\Delta\|_\infty))  K^{0,0} \otimes \id_{\C^2}\,.
\end{equation}
In particular, we infer from (\ref{eqs2}) that
\begin{equation}\label{eqs3}
  \tfrac 12  \,  \Tr  H_\Delta (\Gamma-\Gamma_\Delta)  \geq    
(1-O(h+\|\Delta\|_\infty)) \, \Tr K^{0,0} 
(\alpha-\alpha_\Delta)(\bar\alpha-\bar\alpha_\Delta)\,,
\end{equation}
where $\alpha_\Delta$ denotes again the upper off-diagonal entry of
$\Gamma_\Delta$.  From the definition of $\Delta$, we see that
\begin{equation}
  \|\Delta\|_\infty \leq  h \|\psi_<\|_\infty \| t \|_\infty \,. 
\end{equation}
Moreover, since the Fourier transform of $\psi_<$ is supported in the
ball $|p|\leq \epsilon/h$,
\begin{equation}
  \|\psi_<\|_\infty \leq \sum_p |\widehat\psi_<(p)|  \leq
  \|\psi_<\|_{H^1(\calC)} \left( \sum_{|p|\leq \epsilon h^{-1}} \frac
    1{1+p^2} \right)^{1/2} \leq C  \sqrt{\epsilon/h}\,,
\end{equation}
and hence $\|\Delta\|_\infty \leq O(\epsilon^{1/2} h^{1/2})$. 

Recall the decomposition (\ref{defpsi}) of $\alpha$. We decompose
$\alpha_\Delta$ in a similar way, and define $\phi$ by
\begin{equation}
  \alpha_\Delta = \tfrac h2 \left( \psi_<(x) \widehat\alpha_0(-ih\nabla) 
+ \widehat \alpha_0(-ih\nabla) \psi_<(x)\right) +\phi \,.
\end{equation}
In particular, we have
\begin{equation}
  \alpha - \alpha_\Delta = \sigma  - \phi \,.
\end{equation}
Since $\|\psi_<\|_{H^2}\leq O(\epsilon/h)$, Theorem~\ref{lem3} implies
that $\|\phi\|_{H^1} \leq O(\epsilon h^{1/2})$.  From the positivity
of $K^{0,0}$ we conclude that
\begin{equation}\label{reml}
  \Tr K^{0,0} (\sigma -\phi) (\bar \sigma-\bar\phi)  \geq  
\Tr  K^{0,0} \sigma \bar \sigma   - 2 \re\, \Tr  K^{0,0} \sigma \bar \phi\,. 
\end{equation}

The terms quadratic in $\sigma$ are thus
\begin{equation}\label{quads}
  (1- \delta )  \Tr  K^{0,0} \sigma \bar \sigma 
+ \int_{\calC\times \R^3} V(h^{-1}(x-y))|\sigma(x,y)|^2\,{dx\,dy} 
\end{equation}
with $\delta=O(h+\|\Delta\|_\infty) = O(\epsilon^{1/2} h^{1/2})$.  Pick some $\widetilde \delta
\geq 0 $ with $\delta + \widetilde \delta \leq 1/2$, and write
\begin{align}\nn
  (1- \delta ) K^{0,0} + V & = \widetilde \delta K^{0,0} + \left( 1 -
    2 \delta - 2 \widetilde \delta\right) \left( K^{0,0} + V \right) +
  \left(\delta+\widetilde \delta\right) \left(K^{0,0} + 2 V \right) \\
  & \geq \widetilde \delta K^{0,0} - C \left( \delta + \widetilde
    \delta\right)\,,
\end{align}
where we have used that $V$ is relatively form-bounded with respect to
$K^{0,0}$ to bound the last term. Hence (\ref{quads}) is bounded from
below by
\begin{equation} \widetilde \delta \left( \| \sigma\|_{H^1}^2 -
    (C-\eb+1) \|\sigma\|_2^2\right) - C \delta
  \|\sigma\|_2^2\,. \label{tdt}
\end{equation}
Recall that $\|\sigma\|_2\leq O(h^{1/2}/\epsilon)$. We shall choose
$\widetilde \delta = 0$ if the first parenthesis on the right side of
(\ref{tdt}) is less than $\tfrac 12 \|\sigma\|_{H^1}^2$ (and, in
particular, if it is negative), while $\widetilde \delta = O(1)$ in
the opposite case, i.e., when $\|\sigma\|_{H_1}^2 \geq 2
(C-\eb+1)\|\sigma\|_2^2$. In the latter case we shall have the
positive term $\widetilde\delta \|\sigma\|_{H^1}^2/2$ at our disposal,
which will be used in (\ref{upt}) below.

We are left with estimating the last term in (\ref{reml}), which is
linear in $\sigma$.  Recall from (\ref{def:sigma}) that $\sigma$ is a
sum of two terms, $\xi$ and $\sigma - \xi$, where the latter is
proportional to $\psi_>$, and $\|\xi\|_{H^1} \leq O (h^{1/2})$
independently of $\epsilon$.  Moreover, as the proof of
Theorem~\ref{lem3} shows, $\phi$ is the sum of two terms, $\eta_1$ and
$\phi-\eta_1$, with $\eta_1$ defined in (\ref{def:eta1}) (with $\psi$
replaced by $\psi_<$) and $\|\phi-\eta_1\|_{H^1} \leq O(h^{3/2})$.
Now
\begin{equation}
  \Tr  K^{0,0} \left(  \sigma -  \xi\right) \bar \eta_1 = 0
\end{equation}
as can be seen by writing out the trace in momentum space and using
that $\widehat \psi_<$ and $\widehat \psi_>$ have disjoint support.
Hence
\begin{align}\nn
  \re\, \Tr K^{0,0}  \sigma  \bar \phi & \leq C \left( \|\xi \|_{H^1}
    \|\phi\|_{H^1} + \|\sigma\|_{H^1} \|\phi -\eta_1\|_{H^1}\right)\\
  & \leq O(\epsilon h) + O(h^{3/2})\|\sigma\|_{H^1}\,.
\end{align}
In the case $\|\sigma\|_{H^1}\leq C \|\sigma\|_2$ (corresponding to
$\widetilde \delta =0$ above) we can further bound
$\|\sigma\|_{H^1}\leq O(h^{1/2}/\epsilon)$.  In the opposite case,
where $\widetilde \delta = O(1)$, we can use the positive term
$\widetilde\delta \|\sigma\|_{H^1}^2/2$ from before and bound
\begin{equation}\label{upt}
  \frac {\widetilde \delta}{2}  \|\sigma\|_{H^1}^2 -  
O(h^{3/2})\|\sigma\|_{H^1}  \geq  - O(h^{3})\,,
\end{equation}
which thus leads to an even better bound.

In combination with (\ref{lb2:s1}) these bounds show that
\begin{equation}\label{im1}
  \F^{\rm BCS}(\Gamma)   \geq h \left( \E^{\rm GP}(\psi_<) - C e \right)
\end{equation}
where
\begin{equation}
  e =  h+\epsilon^2+ \epsilon + \frac h\epsilon + \frac {h^{1/2}} {\epsilon^{3/2}}\,.
\end{equation}
The choice $\epsilon = h^{1/5}$ leads to $e \leq C h^{1/5}$.

The completes the lower bound to the BCS energy. The statement
(\ref{thm:dec}) about approximate minimizers follows immediately from
(\ref{im1}) and (\ref{defpsi}).

\bigskip

\noindent {\it Acknowledgments.} Partial financial support by the NSF
and NSERC is gratefully acknowledged.


\end{document}